\documentclass[12pt,titlepage,pdflatex]{article}
\usepackage{setspace} 
\usepackage{natbib}
\usepackage[english]{babel}
\usepackage[margin=1in]{geometry}
\usepackage[OT1]{fontenc}
\usepackage{graphicx}
\usepackage{authblk}
\usepackage{algorithmic}
\usepackage{amssymb}
\usepackage{rotating}
\usepackage{amsmath, amsthm, amsfonts}

\usepackage{bm}
\allowdisplaybreaks

\newtheorem{thm}{Theorem}[]

\newtheorem{prop}{Proposition}[]

\providecommand{\keywords}[1]{\textbf{\textit{Key words---}} #1}

% Define a command for vertically aligning figures.:

\begin{document}
\title{Sparse Clustering of Functional Data}
\date{\vspace{-5ex}}

\author[1]{Davide Floriello}
\author[2]{Valeria Vitelli}
\affil[1]{Computer Science and Software Engineering Department, University of Canterbury, New Zealand}
\affil[2]{Oslo Center for Biostatistics and Epidemiology, Department of Biostatistics, University of Oslo, Norway}
%\email{davide.floriello@pg.canterbury.ac.nz}
%\email{valeria.vitelli@medisin.uio.no}

\doublespacing
\maketitle

\renewcommand{\abstractname}{Author's Footnote}
\begin{abstract}
Davide Floriello is PhD student, Computer Science and Software Engineering Department, University of Canterbury, Private Bag 4800, Christchurch, New Zealand (e-mail: davide. floriello@pg.canterbury.ac.nz); Valeria Vitelli is Post Doctoral Fellow, Oslo Center for Biostatistics and Epidemiology, Department of Biostatistics, University of Oslo, P.O.Box 1122 Blindern, NO-0317 Oslo, Norway (e-mail: valeria.vitelli@medisin.uio.no).
\end{abstract}

\renewcommand{\abstractname}{Abstract}
\begin{abstract}
We consider the problem of clustering functional data while jointly selecting the most relevant features for classification. This problem has never been tackled before in the functional data context, and it requires a proper definition of the concept of sparsity for functional data. Functional sparse clustering is here analytically defined as a variational problem with a hard thresholding constraint ensuring the sparsity of the solution. First, a unique solution to sparse clustering with hard thresholding in finite dimensions is proved to exist. Then, the infinite dimensional generalization is given and proved to have a unique solution. Both the multivariate and the functional version of sparse clustering with hard thresholding exhibit improvements on other standard and sparse clustering strategies on simulated data. A real functional data application is also shown.
\keywords{Sparse Clustering, Functional Data, Weighted Distance, Variational Problem}
\end{abstract}

\section{Introduction}

In a clustering problem, it is unlikely that the real underlying groups differ in all the features considered. In most situations, only a limited number of features is relevant to detect the clusters. This is a known fact, part of the folk knowledge on cluster analysis dealing with vector data in $\mathbb{R}^p,$ and it further complicates the analysis as much as the number of features $p$ is larger than the sample size $N$. We want to consider, here, methods that are able to cluster data while also selecting their most relevant features. Such clustering methods are called \textit{sparse}. Sparse clustering has, at least, three advantages: firstly, if only a small number of features separates the clusters, it might result in a more accurate identification of the groups when compared with standard clustering. Moreover, it helps the interpretation of the final grouping. Finally, it reduces the dimensionality of the problem.

Sparse clustering methods for multivariate data have already been proposed. When dimensional reduction is the major focus, a common approach to non-parametric classification is based on Principal Component Analysis \citep{GC2002,Liu2003}. However, the use of PCA does not necessary lead to a sparse solution, often not even efficient, since principal components are usually linear combinations of all the features considered, i.e. very few loadings are zero. Moreover, there is no guarantee that the reduced space identified via PCA contains the signal that one is interested in detecting via clustering (see the study perfomed in \cite{Chang83}). Indeed sparse PCA clustering methods for vector data have been recently proposed, see for instance \cite{Luss.et.al.10} and references therein.

Model-based approaches to clustering have also been extensively studied in recent years, and they generally assume data to be generated from a mixture of Gaussian distributions with $K$ components \citep{Fraley-Raftery-2002,McLachlan-Peel-2000}. In the past decade it has been realized that the performance of model-based clustering procedures can be degraded if irrelevant or noisy variables are present. Hence, many efforts have been made to develop variable selection for model-based clustering: the proposal by \cite{RD2006}, and its improvements proposed by \cite{MCMM2009a,MCMM2009b}, can be considered as the current state of the art on variable selection methods in model-based clustering. However, even if these approaches seem promising \citep{CMMRR2013}, none of them has been generalized to the functional data framework. 

A different approach, called \emph{regularization}, consists in enforcing the solution of the clustering problem to be sparse by a suitable penalization. For instance, the penalty can be added to the maximization of the log-likelihood \citep{PS2007,WZ2008,XPS2008}. \cite{fm} proposed a completely different strategy, named Clustering Objects on Subsets of Attributes (COSA), which can be seen as a weighted version of K-means clustering where different feature weights within each cluster are allowed. However, this proposal does not correspond to a truly sparse method, since all variables have non-zero weights for positive values of the regularization parameter $\lambda$.  \cite{twitt} introduced a framework for regularization-based feature selection in clustering, which makes use of a Lasso type penalty. This corresponds to a soft thresholding operator, since it results in sparsity just for small values of the tuning parameter $s$ defining the $L^1$ constraint. \cite{twitt} decline their proposal both in the K-means case, and for hierarchical clustering procedures, also proposing a nice strategy to tune the sparsity parameter $s$ on the basis of a GAP statistics.

We aim here at developing a method for sparse clustering of functional data. Except for proposals for clustering sparsely observed curves \citep{James2013}, the only contribution in this direction is \citet{Tarpey14}, where a weighted $L^2$ distance among functional data is proposed to improve the performance of various statistical methods for functional data, including clustering. Even though the optimality criterion is quite different, we follow the same direction, and we propose a generalization of sparse clustering to functional data which is based on the estimation of a suitable weighting function $w(\cdot)$, capable of detecting the portions of the curves domain which are the most relevant to the classification purposes. The generalization is then natural if we define clustering as a variational problem, having $w(\cdot)$ and the data grouping as solutions, and enforce sparsity via a proper penalty or constraint on $w(\cdot)$. Note that variable selection in model-based clustering requires strong modelling assumptions, which are often too restrictive in the functional context; this is the main reason why a regularization approach is preferred in this context. However, differently from \cite{twitt}, the constraint ensuring the sparsity of the solution is a hard thersholding operator, meaning that the solution $w(\cdot)$ has to be non-zero only on a portion of the domain. As a by product of this formulation, we also derive a new formulation for the sparse clustering problem in finite dimensions, with the hard thersholding operator. In this sense, our proposal in finite-dimensions is a limiting case of the lasso-based proposal by \cite{twitt}.

The rest of the paper is organized as follows. In the next section, we briefly review the sparse clustering framework proposed in \cite{twitt} for finite dimensional data, and we recall their main result with the aim of helping the reader in noticing the analogies and differences with our approach. We introduce our hard thresholding proposal to sparse clustering in the finite dimensional setting, and we state the existence and uniqueness of the solution to this problem. In Section \ref{sec:infinite_dim}, after defining the theoretical setting and focussing on the meaning of features selection in infinite dimensions, we state the variational problem defining sparse clustering for functional data, and prove the existence and uniqueness of its solution. In Section \ref{sec:simulations} we test our method on synthetic data, while also comparing it to other multivariate and functional clustering approaches. An application to the analysis of the Berkeley Growth Study data, a benchmark in functional data analysis, is described in Section \ref{sec:realdata}. We conclude the paper in Section \ref{sec:discussion} with a discussion, where we also give a perspective on future developments. All simulations and analysis of real datasets are performed in \textsf{R} \citep{R}.

\section{Multivariate Sparse Clustering}\label{sec:finite_dim}

Let $\mathbb{X}\in \mathbb{R}^{N \times p}$ be the data matrix, and let us indicate with $\mathbf{X}_j$, $j=1,\ldots,p$ the $j$-th column of $\mathbb{X}$, including the values of the $j$-th feature for all data in the sample. Many clustering methods can be expressed as a maximization problem of the form
\begin{equation}\label{eq:twitt1}
 \max_{C_1,\ldots,C_K} \sum^p_{j=1} g_j(\mathbf{X}_j;C_1,\ldots,C_K),
\end{equation}
where $g_j(\mathbf{X}_j;C_1,\ldots,C_K)$ is some function that involves only the $j$-th feature of the data, and $C_1,\ldots,C_K$ is a partition of the $N$ data into $K$ disjoint sets. K-means clustering lies in this general framework, with $g_j(\cdot)$ being the between clusters sum of squares for feature $j$. \cite{twitt} define sparse clustering as the solution to the following maximization problem
\begin{align}\label{eq:twitt2}
& \max_{\mathbf{w}\in\mathbb{R}^p;C_1,\ldots,C_K} \sum^p_{j=1} w_j \cdot g_j(\mathbf{X}_j;C_1,\ldots,C_K), \\
& \text{subject to}\ \Vert \textbf{w} \Vert ^2_{\ell_2} \leq 1,\  \|  \textbf{w}  \|_{\ell_1}  \leq s,\ w_j \geq 0\ \text{for}\ j=1,\ldots,p. \nonumber
\end{align}
The weighting vector $\mathbf{w} \in \mathbb{R}^p $ has the role of feature selector, achieved by the lasso-type constraint on the $\ell_1$ norm. The other constraints are obvious: $\mathbf{w}$ must have at most unitary Euclidean norm, and it must be non-negative. The tuning parameter $s$ is specified by outer information.

The proposal in \cite{twitt} is to tackle the optimization problem (\ref{eq:twitt2}) via an iterative algorithm, that alternatively seeks the maximum in $\textbf{w}$ by holding $C_1,\ldots,C_K$ fixed, or finds the best partition $C_1,\ldots,C_K$ by holding the weighting vector $\textbf{w}$ fixed, until no significant variation in the weight vector \(\textbf{w}\) is observed. Convergence to a global optimum is not achieved in general, but we are guaranteed to increase the objective function at each iteration. When $\textbf{w}$ is fixed, the best partition is found by using a standard clustering procedure to a weighted version of the data. When the aim is optimizing (\ref{eq:twitt2}) with respect to $\textbf{w}$ by holding $C_1,\ldots,C_K$ fixed, the problem can be written as
\begin{align}\label{eq:twitt3}
& \max_{\mathbf{w}\in\mathbb{R}^p} \mathbf{w}^{\prime}\mathbf{a}, \\
& \text{subject to}\ \Vert \textbf{w} \Vert ^2_{\ell_2} \leq 1,\  \|  \textbf{w}  \|_{\ell_1}  \leq s,\ w_j \geq 0\ \text{for}\ j=1,\ldots,p, \nonumber
\end{align}
where $\mathbf{a}\in\mathbb{R}^p$ is such that its $j$-th component is $a_j = g_j(\mathbf{X}_j;C_1,\ldots,C_K)$. For $x \in \mathbb{R}$ and \(c\geq 0,\) define $S(x, c) = \text{sign}(x) (\vert x \vert - c)_+$ to be the soft thresholding operator. For ease of notation we allow for the operator \(S\) to be applied component-wise to vectors of \(\mathbb{R}^p.\) Then, the following result holds, directly from Karush-Kuhn-Tucker conditions (\cite{BoydVande})
\begin{prop}[\textbf{\cite{twitt}}]\label{prop:KKT}\  \\
 For $1 \leq s \leq \sqrt{p},$ a solution to problem (\ref{eq:twitt3}) is $$\textbf{w} = \frac{S(\textbf{a}_+, \Delta)}{\Vert S(\textbf{a}_+, \Delta) \Vert _{\ell^2}},$$ where $x_+$ denotes the positive part of $x\in \mathbb{R}$ and $\Delta = 0$ if that results in $\|  \textbf{w}  \|_{\ell_1}  \leq s$; otherwise, $\Delta > 0$ is chosen to yield $\|  \textbf{w}  \|_{\ell_1}  = s.$
\end{prop}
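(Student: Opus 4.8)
The plan is to verify that the proposed $\mathbf{w}$ satisfies the Karush–Kuhn–Tucker (KKT) conditions for problem (\ref{eq:twitt3}). Since the objective $\mathbf{w}'\mathbf{a}$ is linear and the feasible region is the intersection of the closed Euclidean unit ball, the half-space $\{\sum_j w_j \le s\}$ and the nonnegative orthant, hence convex and compact, a maximizer exists; moreover the only nonlinear constraint, $\Vert\mathbf{w}\Vert_{\ell_2}^2 \le 1$, is strictly satisfied at $\mathbf{w}=\mathbf{0}$, so Slater's condition holds and the KKT conditions are both necessary and sufficient for optimality. The first simplification I would make is to exploit the nonnegativity constraint to linearize the $\ell_1$ term: on the feasible set $\Vert\mathbf{w}\Vert_{\ell_1}=\sum_{j=1}^p w_j$, which removes the nondifferentiability of the $\ell_1$ norm at the origin.

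Introducing multipliers $\lambda \ge 0$ for the $\ell_2$ constraint, $\Delta \ge 0$ for the $\ell_1$ constraint and $\mu_j \ge 0$ for $w_j \ge 0$, the Lagrangian is
\begin{equation}
L(\mathbf{w},\lambda,\Delta,\boldsymbol{\mu}) = \sum_{j=1}^p w_j a_j - \lambda\Big(\sum_{j=1}^p w_j^2 - 1\Big) - \Delta\Big(\sum_{j=1}^p w_j - s\Big) + \sum_{j=1}^p \mu_j w_j,
\end{equation}
and stationarity reads $a_j - 2\lambda w_j - \Delta + \mu_j = 0$ for each $j$. Complementary slackness $\mu_j w_j = 0$ then forces $w_j = (a_j-\Delta)/(2\lambda)$ whenever $w_j>0$, and $a_j \le \Delta$ whenever $w_j = 0$; combining the two cases gives the componentwise formula $w_j = (a_j-\Delta)_+/(2\lambda)$. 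Because $\Delta \ge 0$ this equals $((a_j)_+-\Delta)_+/(2\lambda) = S((a_j)_+,\Delta)/(2\lambda)$, so $\mathbf{w}$ is proportional to $S(\mathbf{a}_+,\Delta)$.

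It remains to pin down the two multipliers. I would argue that $\lambda>0$ at any optimum with $\mathbf{a}_+ \ne \mathbf{0}$ (otherwise stationarity forces $\mathbf{w}=\mathbf{0}$, which is not optimal, since moving along a coordinate with $a_j>0$ strictly increases the objective); by complementary slackness $\lambda(\Vert\mathbf{w}\Vert_{\ell_2}^2 - 1)=0$ this makes the $\ell_2$ constraint active, fixing $2\lambda = \Vert S(\mathbf{a}_+,\Delta)\Vert_{\ell_2}$ and yielding the normalized expression in the statement. Complementary slackness for the $\ell_1$ constraint, $\Delta(\Vert\mathbf{w}\Vert_{\ell_1}-s)=0$, then produces the dichotomy: either $\Delta=0$ with $\Vert\mathbf{w}\Vert_{\ell_1}\le s$, or $\Delta>0$ with $\Vert\mathbf{w}\Vert_{\ell_1}=s$. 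The degenerate case $\mathbf{a}_+=\mathbf{0}$, where the normalization is vacuous, should be flagged separately.

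The main obstacle is not the algebra above but showing that a feasible $\Delta>0$ always exists when $\Delta=0$ overshoots the $\ell_1$ budget. For this I would study $f(\Delta) := \Vert S(\mathbf{a}_+,\Delta)\Vert_{\ell_1}/\Vert S(\mathbf{a}_+,\Delta)\Vert_{\ell_2}$ and show it is continuous and nonincreasing: writing $A(\Delta)=\sum_j (a_j-\Delta)_+$ and $B(\Delta)=\sum_j (a_j-\Delta)_+^2$ over the $m$ coordinates with $a_j>\Delta$, a short computation gives $f'(\Delta) \propto A^2 - mB \le 0$ by Cauchy–Schwarz. Since $f(0)=\Vert\mathbf{a}_+\Vert_{\ell_1}/\Vert\mathbf{a}_+\Vert_{\ell_2}\le\sqrt{p}$ and $f$ decreases toward $1$ as the support shrinks to the largest coordinate, the intermediate value theorem supplies a $\Delta$ with $f(\Delta)=s$ for every $s\in[1,\sqrt{p}]$ not already satisfied at $\Delta=0$; this is precisely why the hypothesis restricts to $1\le s\le\sqrt{p}$, with ties in $\mathbf{a}$ constituting the only degeneracy requiring extra care.
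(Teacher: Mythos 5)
Your proposal is correct and takes precisely the route the paper itself indicates: the paper does not actually prove this proposition (it is quoted from Witten and Tibshirani, with only the remark that it follows directly from Karush--Kuhn--Tucker conditions and a citation to Boyd and Vandenberghe), and your argument --- Slater's condition, stationarity plus complementary slackness giving $w_j \propto S((a_j)_+,\Delta)$, and the intermediate-value/monotonicity argument for the existence of the required $\Delta$ --- is exactly that derivation carried out in full. One inessential inaccuracy worth fixing: when $\lambda=0$, stationarity does \emph{not} force $\mathbf{w}=\mathbf{0}$; it only forces the support of $\mathbf{w}$ into $\{j : a_j=\Delta\}$, a case that genuinely arises when $s=1$ or when the largest $a_j$ is tied, but this does no damage because KKT conditions are sufficient here, so it is enough to exhibit the multipliers $2\lambda=\Vert S(\mathbf{a}_+,\Delta)\Vert_{\ell_2}>0$ and $\mu_j=(\Delta-a_j)_+$, which your construction does implicitly.
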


Instead of enforcing sparsity with a lasso-type $\ell_1$ constraint on the norm of $\mathbf{w}$, our proposal consists in forcing a certain number of components of $\mathbf{w}$ to be exactly zero, i.e. reformulate the same problem with respect to a counting measure. This would mean defining \emph{multivariate sparse clustering} as the solution to the following maximization problem
\begin{align}\label{eq:finitesparsepb}
& \max_{\mathbf{w}\in\mathbb{R}^p;C_1,\ldots,C_K} \sum^p_{j=1} w_j \cdot g_j(\mathbf{X}_j;C_1,\ldots,C_K), \\
& \text{subject to}\ \Vert \textbf{w} \Vert ^2_{\ell_2} \leq 1,\ w_j \geq 0\ \text{for}\ j=1,\ldots,p\ \text{and}\ \mu^\sharp(\left\lbrace j: w_j=0 \right\rbrace ) = m,\nonumber
\end{align}
where \(\mu^\sharp\) indicates the counting measure on \(\Omega=\{1,...,p\}.\) The weighting vector $\mathbf{w} \in \mathbb{R}^p$ (feature selector) is now forced to have exactly $m$ components equal to zero, where \(m\) is an integer such that \(0 \leq m < p.\) There is a relation between the parameter $s$ in (\ref{prop:KKT}) and the parameter $m$ we are introducing here, since for small values of $s$ some components will indeed be null. However, the solution of problem (\ref{eq:finitesparsepb}) is a hard-thresholding solution, which the soft-thresholding solution found by optimizing (\ref{eq:twitt2}) can recover only for some limiting values of $s$. It will be shown via simulation studies that this hard thresholding strategy reduces the misclassification error in a high-dimensional (non necessarily functional) setting.

The optimization problem (\ref{eq:finitesparsepb}) can be tackled by an iterative algorithm. Hence, we only need to consider the following constrained optimization problem
\begin{align}\label{pr:sparseKm}
& \max_{\mathbf{w} \in \mathbb{R}^p} \mathbf{w}^{\prime}\mathbf{b} \\
& \text{subject to}\ \Vert \mathbf{w} \Vert_{\ell^2} \leq 1,\ w_j \geq 0\ \text{for} \ j=1,\ldots,p\ \text{and}\ \mu^\sharp(\left\lbrace j: w_j=0 \right\rbrace ) = m,\nonumber
\end{align}
where \(\mathbf{b}=(b_1,...,b_p)^\prime\) is a vector of \(\mathbb{R}^p,\) and we assume that the components of \(\mathbf{b}\) are all greater than zero and with no ties, i.e. $b_i>0\ \text{and}\ b_i\not=b_j,\ \mbox{for} \ i,j=1,...,p,\ i\not = j$. The vector $\mathbf{b}$ has the role of describing the goodness of the partition with respect to each of the features. In order to represent the solution of problem (\ref{pr:sparseKm}), let \(b_{(i)}\) be the i-th ordered component of the vector \(\mathbf{b}\) and set $B=\{i\in \Omega: b_i > b_{(m)}\}$, letting \(B=\Omega\) if \(m=0,\) and \(B=\emptyset\) if \(m=p.\)

\begin{thm}\label{Teo_sparse_vectorial}\ \\
Problem (\ref{pr:sparseKm}) has a unique solution \(\mathbf{w^*} \in \mathbb{R}^p\) whose components are:
\[
w^*_i=\left \{ \begin{array}{ll}
              \frac{b_i}{(\sum_{j \in B} b_j^2)^{\frac{1}{2}}},& \text{if}\ i\in B,\\
              &\\
              0,&\text{otherwise}.
              \end{array}
    \right.
\]
\end{thm}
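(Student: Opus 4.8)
The plan is to exploit the combinatorial nature of the constraint $\mu^\sharp(\{j : w_j = 0\}) = m$ by splitting the optimization into an outer discrete choice and an inner continuous one. Since requiring exactly $m$ vanishing components is the same as prescribing a support of cardinality $k := p - m$, I would first write the feasible set as the union, over all subsets $S \subseteq \Omega$ with $|S| = k$, of the slices $F_S = \{\mathbf{w} : w_j > 0 \text{ for } j \in S,\ w_j = 0 \text{ for } j \notin S,\ \|\mathbf{w}\|_{\ell^2} \le 1\}$ (note that $w_j \ge 0$ together with $w_j \neq 0$ forces $w_j > 0$ on the support). The objective $\mathbf{w}'\mathbf{b} = \sum_{j \in S} w_j b_j$ then depends only on the coordinates in $S$, so the strategy is to solve the restricted problem on each $F_S$, record the optimal value as a function of $S$, and finally maximize that value over the finitely many admissible supports.

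Second, for a fixed support $S$ I would solve
\begin{align*}
& \max \sum_{j \in S} w_j b_j \\
& \text{subject to}\ \sum_{j \in S} w_j^2 \le 1,\ w_j > 0\ \text{for}\ j \in S.
\end{align*}
The natural tool is the Cauchy--Schwarz inequality, which gives $\sum_{j \in S} w_j b_j \le \bigl(\sum_{j \in S} w_j^2\bigr)^{1/2}\bigl(\sum_{j \in S} b_j^2\bigr)^{1/2} \le \bigl(\sum_{j \in S} b_j^2\bigr)^{1/2}$, with equality precisely when the restriction of $\mathbf{w}$ to $S$ is a nonnegative multiple of the restriction of $\mathbf{b}$ to $S$ having unit norm. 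This forces $w_j = b_j / (\sum_{l \in S} b_l^2)^{1/2}$, the unique maximizer; crucially, since every $b_j > 0$ these weights are strictly positive, so the maximizer really lies in $F_S$. Thus the best value achievable on support $S$ is exactly $(\sum_{j \in S} b_j^2)^{1/2}$.

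Third, it remains to maximize $\sum_{j \in S} b_j^2$ over all $k$-element subsets $S$. Because every $b_j^2$ is strictly positive, the sum is largest when $S$ collects the $k$ largest of the $b_j^2$; and since $t \mapsto t^2$ is increasing on $(0,\infty)$, these coincide with the $k$ largest $b_j$, that is, with the set $B = \{i : b_i > b_{(m)}\}$. The no-ties hypothesis guarantees that this top-$k$ set is unique and strictly beats every other support of size $k$ (replacing any chosen index by a discarded one strictly lowers the sum), so the optimal support is unique; combined with the uniqueness of the inner maximizer this yields the claimed unique global solution $\mathbf{w}^*$.

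The main obstacle is conceptual rather than computational: the constraint is an $\ell_0$-type (counting) condition, hence non-convex, so a direct KKT/Lagrangian analysis as in Proposition \ref{prop:KKT} is not available, and the two-stage decomposition is what makes the problem tractable. The only technical point requiring care is that each slice $F_S$ is not closed, so attainment is not automatic from compactness; this is resolved by the observation above that the Cauchy--Schwarz optimizer has all positive coordinates and therefore sits in the interior of the admissible region. The positivity and no-ties assumptions on $\mathbf{b}$ are exactly what is needed to promote ``a maximizer'' to ``the unique maximizer''.
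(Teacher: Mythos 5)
Your proof is correct, and it takes a genuinely different route from the paper's. The paper proceeds in the classical order: existence first (asserting the feasible set $D$ is compact as a finite union of closed bounded sets), then a value-swapping argument showing that the support $A=\{j: w_j>0\}$ of any optimum must coincide with $B$, then Cauchy--Schwarz to identify the optimal value and exhibit $\mathbf{w}^*$, and finally uniqueness via an angle (cosine) argument. You instead decompose the feasible region into the finitely many slices $F_S$ indexed by supports of size $k=p-m$, solve each slice in closed form via the equality case of Cauchy--Schwarz, and perform the outer maximization by an index-exchange argument, so that existence and uniqueness fall out together. Two points of comparison are worth recording. First, your route is actually more careful on a topological subtlety: the set of vectors with \emph{exactly} $m$ zero components is not closed (a sequence with exactly $m$ zeros can converge to a point with more zeros), so the paper's compactness claim is, strictly speaking, incorrect as stated --- the finite union of closed bounded sets it invokes describes the relaxed constraint $\mu^\sharp(\{j: w_j=0\})\geq m$ rather than the equality constraint. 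Your observation that the per-slice Cauchy--Schwarz maximizer has strictly positive coordinates on $S$ (because every $b_j>0$), and hence lies in $F_S$ itself rather than only in its closure, is exactly what makes attainment legitimate without compactness. (One nitpick: that maximizer sits on the unit sphere, so calling it a point ``in the interior of the admissible region'' is loose; what matters, and what you prove, is membership in $F_S$.) Second, your exchange argument and the paper's are cousins operating at different levels: the paper swaps the \emph{values} $w_i,w_j$ of a hypothetical optimum across $A\setminus B$ and $B\setminus A$, whereas you swap \emph{indices} in the support and compare the closed-form slice optima $(\sum_{j\in S}b_j^2)^{1/2}$; both exploit the no-ties hypothesis in equivalent ways, and the paper's angle argument for uniqueness is subsumed, in your version, by the Cauchy--Schwarz equality conditions together with the strict dominance of the top-$k$ support.
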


\proof\ Consider the function \(f(\mathbf{w})=\sum^p_{j = 1} w_jb_j\) defined on the domain $D=\{\mathbf{w} \in \mathbb{R}^p: \Vert \mathbf{w} \Vert_{\ell^2} \leq 1,\ w_j \geq 0\ \text{for} \ j=1,\ldots,p\ \text{and}\ \mu^\sharp(\left\lbrace j: w_j=0 \right\rbrace ) = m\}$. The function \(f\) is continuous on \(D\) and \(D\) is a compact subset of \(\mathbb{R}^p,\) being the finite union of closed and bounded sets. Therefore problem (\ref{pr:sparseKm}) has a solution. Let \(\mathbf{w}\) be a solution of problem (\ref{pr:sparseKm}) and set $A = \{j: w_j >0 \}$. The set \(A\) and the set \(B\) coincide. Indeed the two sets have the same number of elements. If they were not the same set, there would exist an \(i \in A\setminus B\) and a \(j \in B\setminus A\) such that, by switching the value \(w_i\) with the value \(w_j\), one could construct a \( \mathbf{\tilde {w}}\in \mathbb{R}^p\) (satisfying the constraints of problem (\ref{pr:sparseKm})) such that \(f(\mathbf{\tilde{w}})>f(\mathbf{w})\), against the assumption that \(\mathbf{w}\) is a solution of problem (\ref{pr:sparseKm}). In fact, set:
\begin{equation}\label{w tilde}
\left\{ \begin{array}{lcll}
\tilde{w_j} & = & w_i,\\
\tilde{w_i} & = & w_j,\\
\tilde{w_k} & = & w_k,& \ \text{for} \ k \in \Omega \setminus \{i,j\}.
\end{array}
\right.
\end{equation}
Then $\Vert \mathbf{\tilde{w}} \Vert_{\ell^2} \leq 1,\ \tilde{w}_j \geq 0\ \text{for} \ j=1,\ldots,p\ \text{and}\ \mu^\sharp(\left\lbrace j: \tilde{w}_j=0 \right\rbrace ) = m$, while
\begin{eqnarray}\label{w tilde ottima}
\sum^p_{j=1} w_jb_j
 &=&  \sum_{k \in \{1,...,p\}\setminus \{i,j\}} w_kb_k + w_jb_j +  w_ib_i \nonumber\\
 &<&  \sum_{k \in \{1,...,p\}\setminus \{i,j\}} w_kb_k + w_ib_j +  w_jb_i = \sum^p_{j=1} \tilde{w}_jb_j.
\end{eqnarray}
Thus $A$ and $B$ coincide. Therefore, if \(\mathbf{w}\) is a solution of (\ref{pr:sparseKm}),
\begin{eqnarray*}
\sum_{j=1}^p w_j b_j = \sum_{j \in A} w_j b_j = \sum_{j \in B} w_jb_j &\leq & (\sum_{j \in B} w^2_j)^{\frac{1}{2}}(\sum_{j \in B} b^2_j)^{\frac{1}{2}}\\
&\leq & (\sum_{j \in B} b^2_j)^{\frac{1}{2}}
\end{eqnarray*}
where the first inequality is Cauchy-Schwarz inequality and the second follows from the fact that \(\Vert \mathbf{w} \Vert_{\ell^2} \leq 1.\) Take \(\mathbf{w^*} \in \mathbb{R}^p\) with components
\[
w^*_i=\left \{ \begin{array}{ll}
              \frac{b_i}{(\sum_{j \in B} b_j^2)^{\frac{1}{2}}},& \text{if}\ i\in B,\\
              &\\
              0,&\text{otherwise}.
              \end{array}
    \right.
\]
Then \(\mathbf{w^*}\) satisfies the constraints of problem (\ref{pr:sparseKm}) and $\sum_{j=1}^p w^*_jb_j = (\sum_{j \in B} b^2_j)^{\frac{1}{2}}$. Hence, \(\mathbf{w^*}\) is a solution of problem (\ref{pr:sparseKm}).

Let \(\mathbf{\tilde{w}}\) be another solution of problem (\ref{pr:sparseKm}). Indicate with \(\mathbf{b}^*\) the vector of \(\mathbb{R}^p\) with components $b^*_i=b_i$ if $i \in B$, 0 otherwise, and with \(\theta\) the angle between \(\mathbf{\tilde{w}}\) and \(\mathbf{b}^*.\)
Then $\cos(\theta)= \frac{\sum_{j \in B} \tilde{w}_jb_j}{\Vert \mathbf{\tilde{w}} \Vert_{\ell^2} \Vert \mathbf{b}^* \Vert_{\ell^2}}
=\frac{1}{\Vert \mathbf{\tilde{w}} \Vert_{\ell^2}}$, where the second equality is true because $\sum_{j=1}^p \tilde{w}_jb_j = \sum_{j=1}^p w_jb_j = (\sum_{j \in B} b^2_j)^{\frac{1}{2}}=\Vert \mathbf{b}^* \Vert_{\ell^2},$ being \(\mathbf{\tilde{w}}\) a solution of (\ref{pr:sparseKm}). Since \(\Vert \mathbf{\tilde{w}} \Vert_{\ell^2} \leq 1,\) and $\cos(\theta)=1/\Vert \mathbf{\tilde{w}} \Vert_{\ell^2}$, then \(\Vert \mathbf{\tilde{w}} \Vert_{\ell^2}=1.\)
Hence \(\cos(\theta)=1,\) and
\[
\mathbf{\tilde{w}}= \frac{\mathbf{b}^*}{\Vert \mathbf{b}^* \Vert_{\ell^2}} = \mathbf{w},
\]
proving that the solution of problem (\ref{pr:sparseKm}) is unique.
\endproof

We remark that the uniqueness of the solution given by Theorem \ref{Teo_sparse_vectorial} depends on the hypothesis of having no ties in the vector $\mathbf{b}$, because ties make the sorting of $\mathbf{b}$ not unique. However, this assumption seems reasonable in practice, being $\mathbf{b}$ the vector describing cluster separation (e.g., the Between-Clusters Sum of Squares (BCSS) in case of K-means) with respect to each feature.

Having found the unique optimal weighting vector, given by Theorem \ref{Teo_sparse_vectorial}, we are guaranteed that also the solution to the finite dimensional sparse clustering problem (\ref{eq:finitesparsepb}) is unique, since the number of possible partitions of the data is finite. However, this theoretical guarantee does not solve the problem in practice, since finding the maximum over the set of all possible partitions is a combinatorial optimization problem, for which a complete enumeration search over all possible groupings of the data is computationally impractical in applications. A large number of heuristic search strategies have been proposed in the literature for ordinary weighted clustering based on criteria of the form (\ref{eq:twitt1}). Hence, the optimization problem (\ref{eq:finitesparsepb}) can be tackled by an iterative algorithm where the clustering step is specified via a proper search strategy algorithm. Details on the proposed procedure are given in the next subsection. The algorithm performance will be tested in Section \ref{sec:simulations} on synthetic data.

\subsection{A K-means implementation of Multivariate Sparse Clustering}

We are now going to set up an algorithm which is suitable for finding the solution to problems of the form (\ref{eq:finitesparsepb}), where the functions $g_j(\mathbf{X}_j;C_1,\ldots,C_K)$ correspond to the BCSS with respect to the $j$-th feature and in the $\ell_2$ distance
\begin{equation}\label{eq:BCSSfinite}
g_j(\mathbf{X}_j;C_1,\ldots,C_K) = \frac{1}{N}\sum_{i=1}^N\sum_{i^\prime=1}^N (x_{ij}-x_{i^\prime j})^2 - \sum_{k=1}^K \frac{1}{N_k} \sum_{i i^\prime \in C_k} (x_{ij}-x_{i^\prime j})^2,
\end{equation}
and where $N_k$ is the number of elements belonging to the $k$-th cluster. Hence, the criterion used in our algorithm assigns higher weights to features that can cause a higher increase in the BCSS, i.e. to features with respect to which clusters can be distinguished the most.

It is well-known from the literature on cluster analysis that the $K$-means algorithm \citep{po2,har,Hartigan-1978} is an optimal search strategy with respect to criterion (\ref{eq:finitesparsepb}), when (\ref{eq:BCSSfinite}) is chosen. Hence, we implement the following \emph{Multivariate Sparse $K$-means Clustering Algorithm}:
\begin{enumerate}
\item initialize $C_1,\ldots,C_K$ by running a simple multivariate $K$-means;
\item iterate until convergence:
\begin{itemize}
\item[(i)] holding $C_1,\ldots,C_K$ fixed, use Theorem \ref{Teo_sparse_vectorial} to obtain the optimal weighting vector $\mathbf{w}^*$, where the vector $\mathbf{b}=(b_1,\ldots,b_p)$ is such that $b_j=g_j(\mathbf{X}_j;C_1,\ldots,C_K)$ in (\ref{eq:BCSSfinite}) for each $j=1,\ldots,p$;
\item[(ii)] holding $\mathbf{w}=\mathbf{w}^*$ fixed, find the optimal partition $C_1,\ldots,C_K$ by optimizing criterion (\ref{eq:finitesparsepb}) with BCSS, i.e., run a $K$-means on the weighted data obtained by considering the $N \times N$ dissimilarity matrix whose $(i,i^\prime)$ element is $d_{i,i^\prime}=\sum_{j=1}^p w_j (x_{ij}-x_{i^\prime j})^2$;
\end{itemize}
\item stop when no more changes in the partition $C_1,\ldots,C_K$ are observed at two subsequent iterations.
\end{enumerate}
The optimal weighting vector is given by the $\mathbf{w}^*$ obtained at the last (i) step, while the final clusters are given by $C_1,\ldots,C_K$ in the last (ii) step.

A quite relevant issue remains open at this point: how to select the number $m$ of features that we expect to have associated weight $w_j$ null, for a given application? In simulation studies and real applications, $m$ can be tuned by using a permutation approach based on the computation of a GAP statistics, that has been proved to be quite successful in \cite{twitt} for tuning parameter $s$, and in \cite{twh2001} for choosing the number of clusters $K$.

\section{Functional Sparse Clustering}\label{sec:infinite_dim}

We now focus on the same problem when data lie in an infinite dimensional space. Let $(D, \mathcal{F}, \mu)$ be a measure space, with $\mu$ the Lebesgue measure, and $D \subset \mathbb{R}$ compact. Let \(f_1,...,f_N: D \rightarrow \mathbb{R}\) be a functional data set, with $f_i \in L^2(D)$ for all $i=1,\ldots,N,$ which is a reasonable assumption in most applications. We further assume that a proper functional form has been reconstructed for each datum in a preprocessing step of the analysis, thanks to some smoothing technique (see \cite{rs2} and references therein). We finally assume that, if a phase variability is present, data have already been aligned (see Section \ref{sec:discussion} for a discussion on this issue).

Likewise the finite dimensional case, many clustering methods for functional data can be written as a variational problem of the form
\begin{equation}\label{eq:ClustFunc}
\max_{C_1,\ldots,C_K} \int_D g(f_1(x),\ldots,f_N(x);C_1,\ldots,C_K) dx,
\end{equation}
where for ease of notation we write $dx$ instead of $d\mu(x)$. The function $g(\cdot)$ depends on the chosen clustering strategy; for instance, the standard K-means algorithm for functional data looks for a partition \(C_1,...,C_K\) of the data labels that maximizes the BCSS
\begin{align}\label{eq:KmeansFunc}
& g(f_1(x),\ldots,f_N(x);C_1,\ldots,C_K) = \\
& \frac{1}{2N} \sum^N_{i,j = 1} \left( f_i(x) - f_j(x) \right)^2 - \sum^K_{h=1} \frac{1}{\vert 2C_h \vert} \sum_{i,j \in C_h} \left( f_i(x) - f_j(x) \right)^2.\nonumber
\end{align}
We first need to give a meaning to the concept of feature selection for functional data, getting inspired by the finite-dimensional case. When data are elements of $\mathbb{R}^p,$ feature selection can be carried out by an optimal weighting vector $\bf{w} \in \mathbb{R}^p$. The weighting vector \(\bf{w}\) can be analytically computed according to Proposition \ref{prop:KKT} in the context of soft thresholding, and to Theorem \ref{Teo_sparse_vectorial} in the context of hard thresholding. The latter approach, which we suggest, consists in finding the subset $S$ of the set of $p$ features that discriminate the most between clusters. This idea can be adapted to functional data, provided we change the counting measure on the finite dimensional set of features with a proper measure \(\mu\) on the continuum features index set \(D\). In the functional setting the role of the feature selector can be thus played by a weighting function $w: D \rightarrow \mathbb{R}$. We define \emph{functional sparse clustering} as the solution to a variational constrained optimization problem, where the constraints on $w$ enforce the sparsity of the optimal solution
\begin{align}\label{eq:infinitesparsepb}
 \max_{w \in L^2(D); C_1, \ldots  C_K} & \int_{D} w(x) g(f_1(x),\ldots,f_N(x);C_1,\ldots,C_K) dx, \\
 \text{subject to:}\ \ & \Vert w(x) \Vert_{L^2(D)} \leq 1,\ \ w(x) \geq 0\  \mu-a.e.\ \text{and}\ \mu(\left\lbrace x \in D: w(x)=0\right\rbrace ) \geq m. \nonumber
\end{align}
The weighting function $w$ is non-negative almost everywhere on $D$, and it belongs to the closed unitary ball of $L^2(D)$, so that the optimization problem is well-posed. The sparsity of the method is ensured by the fact that $w$ must be zero on a set of measure $m$. As for the weighting vector in finite dimensions, the role of $w$ is to select the relevant features for clustering: if a Borel set $B \subset D$ is not relevant for cluster identification, we should expect $w(x) =0$ for $x \in B$, while if curves belonging to different clusters differ greatly in $B$, then $w$ should be strictly positive on that subset, with each value \(w(x)\) reflecting the importance of $x\in D$ for partitioning the data.

As for the finite dimensional case, we first focus on the problem of optimizing (\ref{eq:infinitesparsepb}) with respect to $w$ by holding $C_1,\ldots,C_K$ fixed. Problem (\ref{eq:infinitesparsepb}) can be tackled via an iterative algorithm, that alternatively computes the optimal $w$ by holding $C_1,\ldots,C_K$ fixed, or finds the best partition $C_1,\ldots,C_K$ by applying a suitable functional clustering technique, where the distance among functions is weighted according to the optimal $w$. We will give details on this clustering procedure in Section \ref{sec:algo}. We now focus on the following variational problem
\begin{align}\label{eq:infinitesparsepb_solow}
 \max_{w \in L^2(D)} & \int_{D} w(x) b(x) dx, \\
 \text{subject to:}\ \ & \Vert w(x) \Vert_{L^2(D)} \leq 1,\ \ w(x) \geq 0\  \mu-a.e.\ \text{and}\ \mu(\left\lbrace x \in D: w(x)=0\right\rbrace ) \geq m, \nonumber
\end{align}
where $0 < m < \mu(D) < \infty$, and $b(x)$ is continuous and non-negative a.e. Since $D$ is compact, $b\in L^\infty(D)$. The function $b(x)$ depends on the data and on the partition $C_1, \ldots  C_K$, and it defines how well the partition discriminates among clusters. Our main result states the existence and uniqueness of the solution to problem (\ref{eq:infinitesparsepb_solow}).
\begin{thm}\label{teo_infinito_dim}\ \\
There exists a solution to problem (\ref{eq:infinitesparsepb_solow}) given by $w(x) = \frac{b(x)}{\Vert b(x) \Vert_{L^2(B)} }I_B(x),$ where $I_B(x)$ is the indicator function of the set $B = \left\lbrace x \in D: b(x) > k \right\rbrace $, for a suitable $k \geq 0$. Moreover the solution is [$\mu$]-a.e. unique if the function $\phi(t):=\mu(\{x \in D:b(x)<t\})$ is continuous.
\end{thm}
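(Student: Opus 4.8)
The plan is to reduce the variational problem to the purely geometric task of choosing the best support for $w$, exactly as in the proof of Theorem \ref{Teo_sparse_vectorial}, with the counting measure replaced by $\mu$ and the Cauchy--Schwarz inequality in $\mathbb{R}^p$ replaced by the one in $L^2(D)$. For any feasible $w$, write $S_w=\{x\in D: w(x)>0\}$; feasibility gives $\mu(S_w)=\mu(D)-\mu(\{w=0\})\le \mu(D)-m=:L$. By Cauchy--Schwarz in $L^2(D)$,
\[
\int_D w(x)b(x)\,dx=\int_{S_w} w\,b\,dx\le \Big(\int_{S_w}w^2\Big)^{1/2}\Big(\int_{S_w}b^2\Big)^{1/2}\le \Big(\int_{S_w}b^2\Big)^{1/2},
\]
the last step using $\|w\|_{L^2(D)}\le1$. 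Thus the objective is controlled by $\|b\|_{L^2(S_w)}$, and the problem is reduced to maximising $\int_S b^2\,dx$ over measurable $S$ with $\mu(S)\le L$.

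First I would establish that this reduced problem is solved by a superlevel set of $b$. Since $b\ge0$, for any admissible $S$ and any threshold $k$, moving mass from $\{b\le k\}\cap S$ onto an equal-measure subset of $\{b>k\}\setminus S$ cannot decrease $\int_S b^2$; this is the continuous analogue of the swapping argument in (\ref{w tilde ottima}), and it shows the supremum equals $\int_{B}b^2$ for $B=\{x:b(x)>k\}$ with $k$ chosen so that $\mu(B)=L$. Such a $k$ exists because $t\mapsto\mu(\{b>t\})$ is non-increasing and, absent jumps, attains every value between $0$ and $\mu(\{b>0\})$ (one takes $k=0$ if $\mu(\{b>0\})\le L$). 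With $B$ so chosen, the candidate $w(x)=b(x)\,I_B(x)/\|b\|_{L^2(B)}$ is feasible — it is non-negative, vanishes on $D\setminus B$ which has measure $\ge m$, and has unit $L^2(D)$ norm — and a direct computation gives $\int_D wb=\|b\|_{L^2(B)}$, meeting the upper bound. Hence $w$ is a maximiser and existence follows.

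For uniqueness under the continuity of $\phi$, let $\tilde w$ be any solution and $\tilde S=\{\tilde w>0\}$. Tracing the equality cases above, optimality forces $\|b\|_{L^2(\tilde S)}$ to equal the maximal value $\|b\|_{L^2(B)}$, so $\tilde S$ too maximises $\int_S b^2$ subject to $\mu(S)\le L$. Continuity of $\phi$ means $\mu(\{b=t\})=0$ for every $t$, so there is no atom at the critical level; I would then prove the maximising support is $\mu$-a.e.\ unique, i.e.\ $\tilde S=B$ up to a null set. Writing $0=\int_B b^2-\int_{\tilde S}b^2=\int_{B\setminus\tilde S}b^2-\int_{\tilde S\setminus B}b^2$ and combining $b>k$ on $B\setminus\tilde S$, $b\le k$ on $\tilde S\setminus B$, and $\mu(\tilde S)\le\mu(B)$, one concludes $\mu(B\triangle\tilde S)=0$. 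Finally, with $\tilde S=B$ a.e., the equality case of Cauchy--Schwarz forces $\tilde w=c\,b$ on $B$ for some $c>0$, and the unit-norm equality (obtained as in Theorem \ref{Teo_sparse_vectorial} from $\cos\theta=1/\|\tilde w\|_{L^2}\le1$) fixes $c=1/\|b\|_{L^2(B)}$, giving $\tilde w=w$ $\mu$-a.e.

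The main obstacle is precisely the passage through the level sets of $b$, which has no counterpart in the finite-dimensional proof because there the counting measure and the no-ties hypothesis make every ``level set'' a single coordinate. Here the threshold $k$ need not produce a superlevel set of measure exactly $L$: if $b$ is constant on a set of positive measure straddling the critical level — equivalently, if $\phi$ jumps — then no set $\{b>k\}$ has measure $L$, the optimal support must contain only part of that level set, and the choice of which part is arbitrary, so uniqueness fails. This is exactly why continuity of $\phi$ is the correct hypothesis, and carefully justifying the no-atom reduction (rather than the Cauchy--Schwarz step, which is routine) is where the real work lies.
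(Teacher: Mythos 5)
Your argument is correct in substance, but it takes a genuinely different route from both of the paper's proofs. The paper's main proof is by discretization: it approximates $b$ by simple functions $b_n$, applies the finite-dimensional Theorem \ref{Teo_sparse_vectorial} to each discretized problem, and shows that the resulting step-function optimizers $w_n$ converge uniformly to the candidate $w$, deducing optimality in the limit (uniqueness is then dismissed as obvious); the appendix proof is variational but proceeds by contradiction, using open covers and translation arguments (which need $\mu$ translation-invariant and exploit the interval structure and compactness of $D\setminus B$) to force a better competitor's support to agree with $B$, followed by perturbation identities on $B$. You instead bound the functional by Cauchy--Schwarz, reduce to maximizing $\int_S b^2$ over supports with $\mu(S)\le\mu(D)-m$, solve that by the bathtub principle (your level-set swap), and close by tracing equality cases. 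This is a direct attainment-of-an-upper-bound argument: no discretization, no limit, no translation invariance, and no continuity of $b$ is needed, so your proof is both shorter and more general (it works on any finite non-atomic measure space with $b\in L^2$ non-negative), and it isolates exactly where continuity of $\phi$ enters — the choice of $k$ and the absence of mass on level sets — which the paper's proofs leave implicit. What the paper's main proof buys in exchange is the explicit link between the functional problem and its finite-dimensional counterpart, consistent with the paper's narrative. One small patch is needed in your uniqueness step: the displayed identity together with $b>k$ on $B\setminus\tilde S$, $b\le k$ on $\tilde S\setminus B$ and $\mu(\tilde S)\le\mu(B)$ yields only $\mu(B\setminus\tilde S)=0$ and $\int_{\tilde S\setminus B}b^2=0$, i.e.\ $b=0$ a.e.\ on $\tilde S\setminus B$, not yet $\mu(\tilde S\setminus B)=0$; to rule out $\tilde w$ charging part of $\{b=0\}$, invoke either continuity of $\phi$ at $t=0$ (which gives $\mu(\{b=0\})=0$) or the Cauchy--Schwarz equality case $\tilde w=c\,b$ on $\tilde S$, which would force $\tilde w=0$ a.e.\ there, contradicting $\tilde S=\{\tilde w>0\}$. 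Either fix is one line, so this is a cosmetic gap, not a flaw in the approach.
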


\proof \ Let $\{b_n(x)\}_{n \in \mathbb{N}}$ be a monotone increasing sequence of simple functions $b_n(x) = \sum_{i = 1}^n c^n_i I_{B^n_i}(x)$, such that $b_n(x)$ converges to $b(x)$ in $L^{\infty}(D).$ The collection of sets $\{B_i^n\}_{i=1}^n$ forms a partition of $D,$ i.e. $B_i^n \cap B_j^n = \emptyset\ \forall\ i\neq j$ and $\bigcup_{i = 1}^n B^n_i = D$. Fix $k \in \mathbb{R}^+$ such that $\mu(\left\lbrace x \in D: b(x) > k \right\rbrace ) = \mu(D) - m.$ This is always possible thanks to the continuity of $\phi(t)$. Then, the set $A_k = \left\lbrace x \in D: b(x) \leq k \right\rbrace$ is such that $\mu(A_k) = m.$ Thanks to the continuity of $b(x)$ and the finiteness of $D$, $A_k$ is also a compact set. Let $H^n_k$ be the sub-collection of indices of the sets $\{B_i^n\}_{i=1}^n$ such that the intersection with $A_k$ is non empty $H^n_k:=\{h\in\{1,\ldots,n\}:A_k \cap B_h^n \neq \emptyset\},$ and let the number of elements in $H^n_k$ be $m^n_k$. Then, define $\tilde{B}^n_k := \cup_{h \in H^n_k} B_h^n.$ We have:
\begin{enumerate}
\item $b(x) > k, \forall x \in D\setminus \tilde{B}^n_k;$
\item $A_k \subseteq \tilde{B}^n_k,$ and thus $\mu(\tilde{B}^n_k) \geq \mu(A_k) = m$;
\item $c^n_h \leq k, \forall h \in H^n_k$.
\end{enumerate}
Consider $\mathbf{w},\mathbf{b}^n\in\mathbb{R}^n,$ and the following finite-dimensional optimization problem:
\begin{equation}\label{max discreto}
\max_{\mathbf{w}}  \left\langle \mathbf{w}, \mathbf{b}^n \right\rangle  
\end{equation}
\begin{equation*}
\text{s. t.}\ \Vert \mathbf{w} \Vert_{\ell^2} \leq 1,\ \ w_i \geq 0\  \forall i = 1,\ldots,n\ \text{and}\ \mu^{\sharp}(\left\lbrace i: w_i = 0\right\rbrace ) \geq m^n_k,
\end{equation*}
where $\mathbf{b}^n = (c^n_1, \ldots, c^n_n)^\prime$. According to Theorem \ref{Teo_sparse_vectorial}, the solution is given by the vector $\mathbf{w}^{*,n} = (w_1^{*,n}, \ldots, w_n^{*,n})^\prime$, where:
\begin{equation}\label{eq:finitodimproof}
w^{*,n}_i=\left \{ \begin{array}{ll}
              \frac{c^n_i}{(\sum_{i \in \mathcal{I}^n_k} (c^n_i)^2)^{\frac{1}{2}}}\ \text{if}\ i \in \mathcal{I}^n_k,\\
              &\\
              0\ \text{otherwise},
              \end{array}
    \right.
\end{equation}
with $\mathcal{I}^n_k = \{1,\ldots,n\}\setminus H^n_k.$ It is always possible to select the collection $\{B_i^n\}_{i=1}^n$ such that there are no ties in $\mathbf{b}^n$, and thus the solution in (\ref{eq:finitodimproof}) is unique for every $n$. Now, consider the sequence of functions $\left\lbrace w_n(x) \right\rbrace_{n \in \mathbb{N}}$ defined as $w_n(x) = \sum_{i = 1}^n w_i^{*,n}I_{B^n_i}(x)$. We claim that $w_n(x) \longrightarrow w(x)$ uniformly, where $w(x) = \frac{b(x)}{\Vert b(x)I_B(x) \Vert_{L^2(D)}}I_B(x)$ and $B = \left\lbrace x \in D: b(x) > k \right\rbrace.$ Indeed, because of the uniform convergence of $b_n(x)$ to $b(x)$ and the boundedness of $D$, we have that
\begin{equation}\label{unif conv}
\sum_{i \in \mathcal{I}^n_k} c^n_iI_{B^n_i}(x) \xrightarrow{L^{\infty}} b(x)I_B(x)
\end{equation}
and $\lVert b_n(x) \lVert_{L^p} \rightarrow \lVert b(x) \lVert_{L^p},\ \forall p\geq 1.$ Then
\begin{eqnarray}\label{eq:closingproof}
\lVert w_n(x) - w(x) \lVert_{L^{\infty}} & \leq & \Vert  \sum_{i} \frac{c^n_i}{(\sum_{i \in \mathcal{I}^n_k} (c^n_i)^2)^{\frac{1}{2}}}I_{B^n_i}(x) - \sum_{i}\frac{c^n_i}{\lVert b(x) \rVert_{L^2(B)}}I_{B^n_i}(x)  \Vert_{L^{\infty}} \nonumber\\
& + & \Vert \sum_{i}\frac{c^n_i}{\lVert b(x) \rVert_{L^2(B)}}I_{B^n_i}(x) - \frac{b(x)}{\Vert b(x) \Vert_{L^2(B)}}I_B(x)  \Vert_{L^{\infty}}.
\end{eqnarray}
As $n\rightarrow\infty$, the second term in (\ref{eq:closingproof}) vanishes because of (\ref{unif conv}), whereas for the first term we have:
\begin{equation*}
\Vert  \sum_{i} \frac{c^n_i}{(\sum_{i \in \mathcal{I}^n_k} (c^n_i)^2)^{\frac{1}{2}}}I_{B^n_i}(x) - \sum_{i}\frac{c^n_i}{\lVert b(x) \rVert_{L^2(B)}}I_{B^n_i}(x)  \Vert_{L^{\infty}} \leq
\end{equation*}
\begin{equation*}
 \max_{x \in D} b(x) \cdot |\frac{\lVert b(x) \lVert_{L^2} - \lVert b_n(x) \lVert_{L^2}}{\lVert b_n(x) \lVert_{L^2}\lVert b(x) \lVert_{L^2}}| \leq \frac{\max_{x \in D} b(x)}{\lVert b_1(x) \lVert^2_{L^2}} \cdot |\lVert b(x) \lVert_{L^2} - \lVert b_n(x) \lVert_{L^2}|,
\end{equation*}
which tends to 0, when $n \rightarrow \infty$, given that $b_1(x) \neq 0$. If $b_1(x) \equiv 0$, then we take $b_2(x)$ in the last inequality and the result still holds. It is easy to check that $w(x)$ satisfies the norm and the non-negativity constraints. The constraint on the measure of the sparsity set is implied by the uniform convergence and the maximality of $w(x)$ follows from the continuity of the limit. The uniqueness a.e. of the solution is obvious.
\endproof

Since the proof makes use of Theorem \ref{Teo_sparse_vectorial} in the simple function approximation, we do not encounter the same issues about the uniqueness of the solution which we were facing in the finite dimensional case. Indeed, if the function $b(x)$ is constant on an interval of positive measure, it is sufficient to keep the partition $\{B_i^n\}_{i=1}^n$ fixed $\forall\ n$ on the interval where $b(x)$ is constant, and let it vary on the rest. We remark that an alternative proof of this result, which completely relies on variational theory, is reported in the Appendix.

\subsection{A K-means implementation of Functional Sparse Clustering}\label{sec:algo}

We here propose a possible iterative algorithm implementing the functional sparse clustering framework described so far. In particular, we aim at solving the variational problem in (\ref{eq:infinitesparsepb}), where $g(\cdot)$ is specified by (\ref{eq:KmeansFunc}), and thus the solution $w:D \rightarrow \mathbb{R}$ quantifies the increase in the BCSS that each portion of the domain $D$ could generate. Note that we have to maximize over the infinite dimensional set of weighting functions satisfying the constraints, and jointly on the (finite, but possibly huge) set of all possible cluster assignments.

To this aim, we propose the following $K$-mean inspired iterative strategy, that we name \emph{Functional Sparse $K$-means Clustering Algorithm:}
\begin{enumerate}
\item initialize $C_1,\ldots,C_K$ by running functional $K$-means \citep{tk};
\item iterate until convergence:
\begin{itemize}
\item[(i)] holding $C_1,\ldots,C_K$ fixed, use Theorem \ref{teo_infinito_dim} to obtain the optimal weighting function $w^*(x)$;
\item[(ii)] holding $w=w^*$ fixed, find the optimal partition $C_1,\ldots,C_K$ by optimizing criterion (\ref{eq:infinitesparsepb}) with BCSS, i.e., run a functional $K$-means algorithm according to the weighted measure
\begin{equation}\label{eq:wmeasure}
d_w(f_i,f_{i^\prime}) = \int_D w(x)(f_i(x)-f_{i^\prime}(x))^2dx;
\end{equation}
\end{itemize}
\item stop when there are no more changes in the partition.
\end{enumerate}
The optimal weighting function is given by the $w^*$ obtained at the last (i) step, while the final clusters are given by $C_1,\ldots,C_K$ in the last (ii) step. As we already noted in the multivariate case, this kind of procedure only assures that the value of the objective functional is increased at each step, but it does not assure that a global optimum is achieved. We also remark that the same permutation-based approach for tuning $m$ on the basis of a GAP statistics, proposed for the finite dimensional setting, can be applied here: the domain $D$ is subdivided into a possibly large set of sub-domains, and the procedure is then applied to the observed functions after they have been randomly permuted within each sub-domain. The number of sub-domains depends obviously on the available computing power and on the data dimension.

\section{Simulation Studies}\label{sec:simulations}

The previously described setting for sparse clustering is here tested on synthetic data in various situations.

\subsection{A simulation study for multivariate sparse clustering}\label{subsec:sim1}

Aim of the present simulation study is to compare results of our $K$-means implementation of multivariate sparse clustering with results given by standard $K$-means and by the sparse $K$-means proposed in \cite{twitt}.

The simulation is run as follows. The number of classes is set to $K = 3$, both for data generation and for clustering, and we generate $20$ observations per class, thus leading to $n = 60$ data in each scenario. Only $q = 10$ features are responsible for differences among classes. We simulate three different scenarios, where the data dimension $p$ is respectively equal to $50,200,500$: this means that in the less extreme situation $20\%$ of the features are relevant, while in the more extreme one only $2\%$ of them are relevant. Data are independently generated from Gaussian distributions such that, for $i=1,\ldots,n$ and $j=1,\ldots,p$, $X_{ij} \sim N(\mu_{ij},\sigma^2)$, with $\sigma=0.2$ and 
$$
\mu_{ij} = \left\{\begin{array}{ll}
j/p & \text{if }j>q;\\
j/p + 1.5\sigma(1_{C_2}(i) - 1_{C_3}(i)) & \text{if }j\leq q;
\end{array}\right.
$$
where $C_k,\ k=1,\ldots,K$ is the set of data belonging to the $k$-th cluster and $1_A(\cdot)$ is the indicator function of the set $A$. In order to compare the different partitions we use the \emph{Classification Error Rate} (CER), which equals 0 if the partitions agree perfectly, and 1 in case of complete disagreement. Note that CER also equals 1 minus the Rand index \citep{Rand71}.

\begin{table}
  \centering
  \footnotesize{
  \begin{tabular}{cccc}
           & $p = 50$ & $p = 100$ & $p = 500$ \\
    \hline
           std        &  0.0474(0.0641)  &    0.121(0.0640) & 0.213(0.0630)  \\
           WT sparse  &  0.0277(0.0295)  &    0.0216(0.0219) & 0.0307(0.0281)  \\
           NEW sparse &  0.0106(0.0106)  &    0.0118(0.0117) & 0.0225(0.0273)  \\
    \end{tabular}}
    \caption{results of the simulation study of Section \ref{subsec:sim1}. Mean (standard devation) of the CER over 20 simulations. Along rows, results obtained with the three methods: standard $K$-means (std), sparse $K$-means of \cite{twitt} (WT sparse) and our implementation of sparse $K$-means (NEW sparse).}\label{tab:sim1} 
\end{table}

\begin{figure}[t]
\centerline{\includegraphics[width=.7\textwidth]{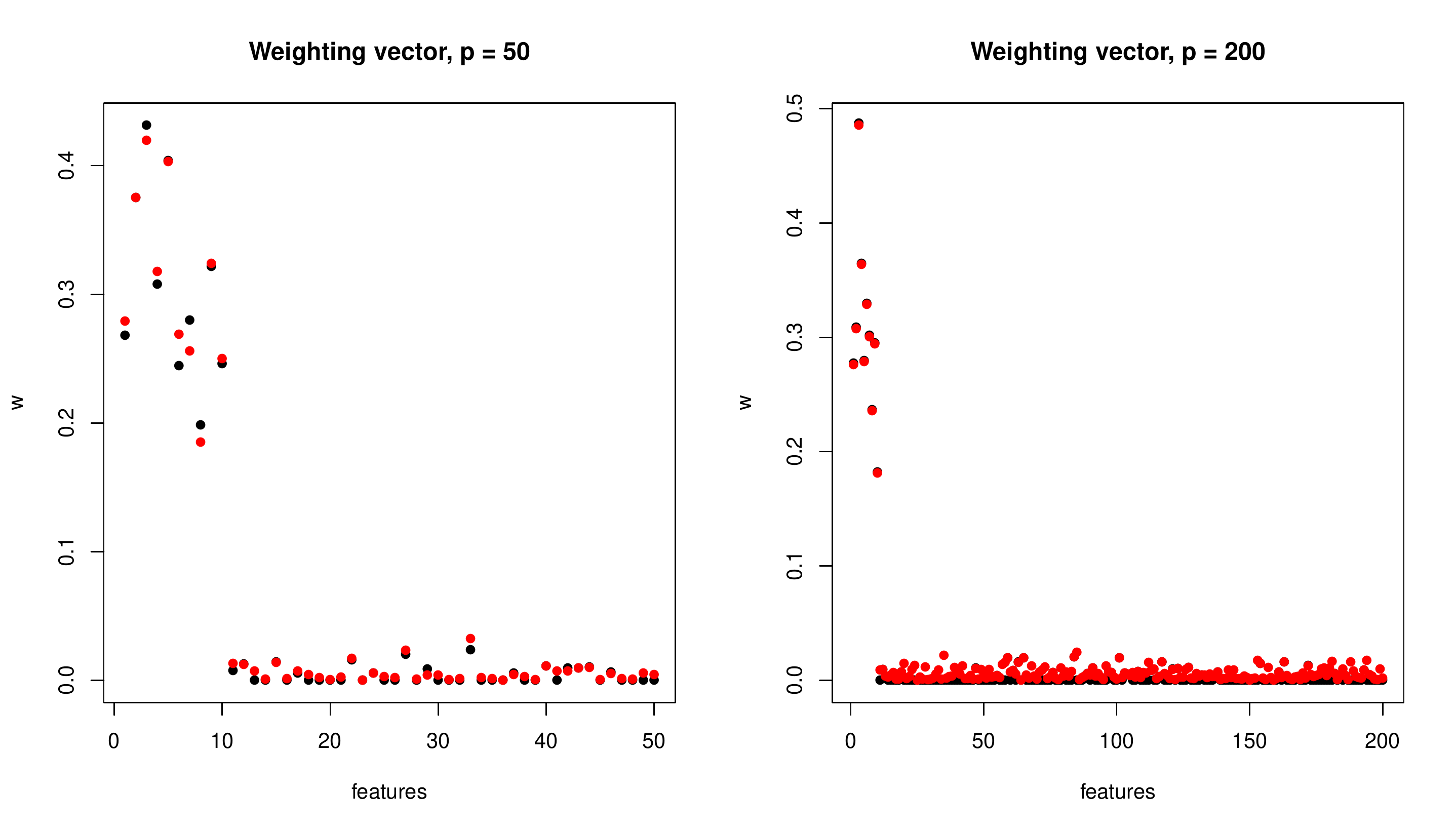}}
\caption{results of the simulation study of Section \ref{subsec:sim1}. Plot of the weighting vector $\mathbf{w}$ obtained with the sparse $K$-means of \cite{twitt} (red), and with our implementation of sparse $K$-means (black), on two synthetic datasets with $p = 50$ (left) and $p = 200$ (right).}\label{fig:sim1}
\end{figure}

We aim at checking the adherence of the detected grouping to the real known classes: we expect both our sparse $K$-means and the proposal in \cite{twitt} to be accurate in all situations, and to improve on standard $K$-means if the proportion of redundant features is large. Results are shown in Table \ref{tab:sim1}: we notice that both sparse methods perform well in all situations, while the performance of $K$-means becomes worse when a large proportion of noisy irrelevant features is present. We also notice a slightly better performance of our implementation of sparse $K$-means with respect to \cite{twitt}. The slight superiority of our sparse approach might be explained in the light of a more direct impact of the sparsity parameter $m$ on clustering results, with respect to the $s$ parameter of \cite{twitt}. In Figure \ref{fig:sim1} we compare the two methods in terms of the optimal weighting vector $\mathbf{w}$, for two simulations with $p = 50$ (left) and $p = 200$ (right): in our implementation of sparse $K$-means, the optimal sparse parameter estimated via GAP statistics is $m = 25$ and $m = 160$ for the two simulations, respectively. This means that 50\% and more than 75\% of the features, respectively in the two cases, are correctly recognized as irrelevant and thus discarded from the classifier with a zero associated weight. The result obtained with the approach described in \cite{twitt}, instead, shows that a positive small weight is given to all features, thus potentially including noisy confounders in the classifier.

In conclusion, it is worth remarking that the two procedures are comparably accurate, since both correctly assign larger weights to the first 10 features, the ones that are responsible for differences in the signal distribution across classes. In our approach, though, the sparse parameter has a more straightforward interpretation in terms of the number of irrelevant features, which are immediately pointed out by a null weight.

\begin{figure}[t]
\centering\includegraphics[width=0.4\textwidth]{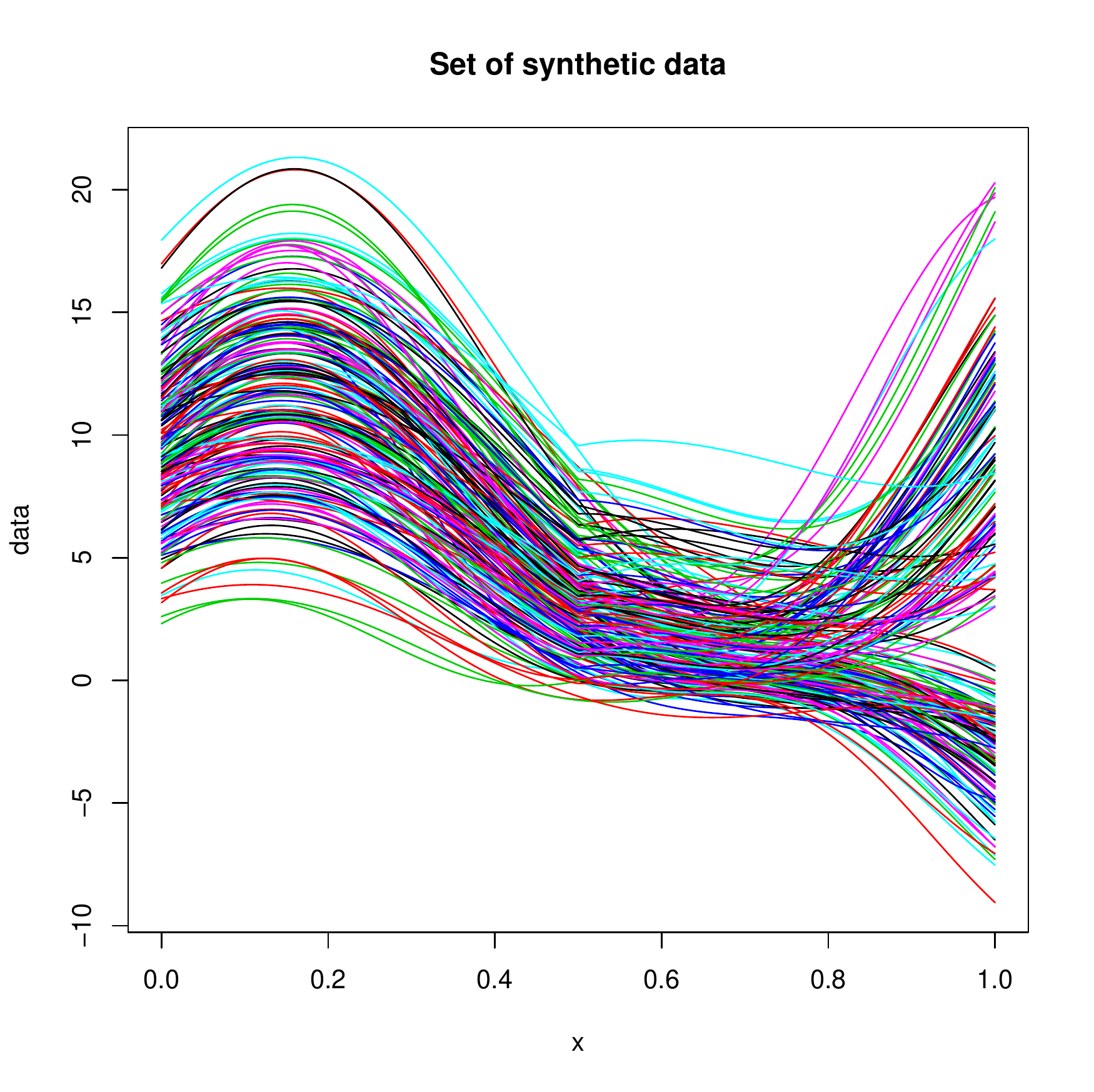}
\caption{sample of synthetic functional data generated for the simulation study in Section \ref{subsec:sim2}.}\label{fig:sim2data}
\label{last}
\end{figure}

\begin{figure}[!h]
\centering\includegraphics[width=.8\textwidth]{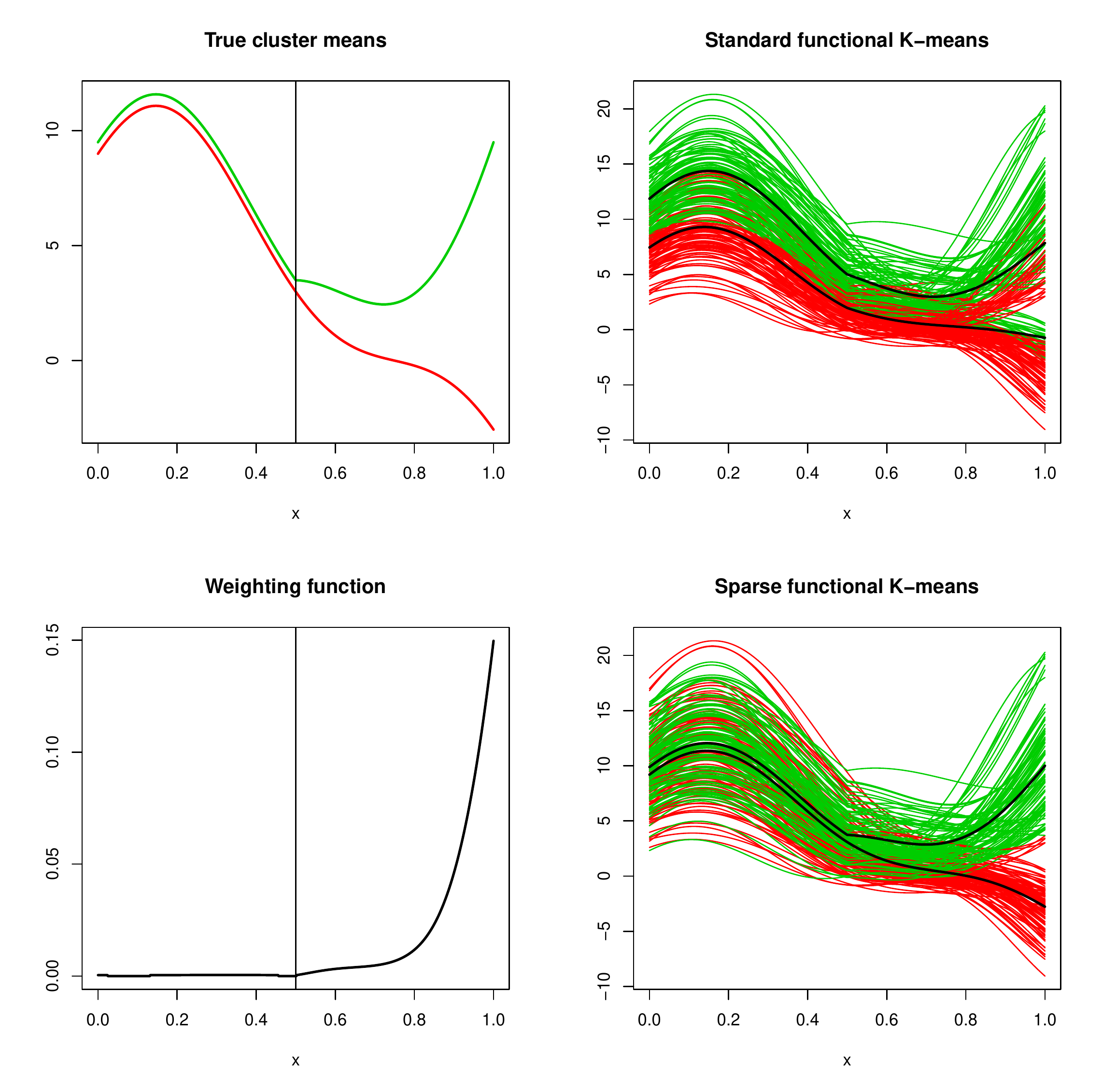}
\caption{results of the simulation study of Section \ref{subsec:sim2}. True cluster mean functions (top, left); one of the synthetic datasets coloured according to the clusterization obtained with standard functional $K$-means (top, right) and with sparse functional $K$-means (bottom, right); optimal weighting function computed by the sparse approach (bottom, left). The vertical line in the left plots is drawn for $x=1/2$.}\label{fig:sim2}
\end{figure}

\subsection{A simulation study for functional sparse clustering}\label{subsec:sim2}

Aim of this second simulation study is to test our innovative $K$-means implementation of functional sparse clustering. Results will be compared with standard functional $K$-means.

The simulation is run as follows. The number of classes is set to $K = 2$, both for data generation and for clustering, with $100$ observations per class, thus leading to $n = 200$ data in each scenario. The domain is chosen to be the interval $D=[0,1]$. The mean function of data belonging to the first cluster is $f_1(x) = (b\sin(b\pi x)+a)\cdot(a-4x) + c,\quad x\in D,$ with $a=3$, $b=2$ and $c=0$, while the second cluster has mean function
\begin{equation}\label{eq:media2}
f_2(x) = \left\{\begin{array}{lc}
(b\sin(b\pi x)+a)\cdot(a-4x)+c    & x\in[0,\frac{1}{2}];\\
(b\sin(b\pi x)+a)\cdot (a-4(1-x))-2c(x-1) & x\in(\frac{1}{2},1],
\end{array}\right.
\end{equation}
with $c=1/2$. The two true cluster mean functions are shown in Figure \ref{fig:sim2} (top-left panel): note that they are nearly equal on the first half of the domain, the only difference being a vertical shift of 1/2. On the second portion of the domain, instead, they become increasingly different for values of the abscissa $x$ closer to 1. Therefore we expect the weighting function to give more importance to the second half of the domain, with higher values when closer to 1, and to be zero in the first half. Data are generated according to $f_1(x)$ and $f_2(x)$, for the first and second cluster respectively, with $a \sim N(3,0.5^2)$ and $b \sim N(2,0.25^2)$. For data belonging to the first cluster, the mean function $f_1(x)$ is used with $c \sim N(0,0.5^2)$, while for data belonging to the second cluster $f_2(x)$ is used with $c \sim N(0.5,0.5^2)$. Note that a noise term has not been added to the synthetic data: indeed, one of the assumptions of our proposal is that the functional form of the data has already been reconstructed (i.e., no smoothing is needed). One of the sets of synthetic data analysed in this simulation study is shown in Figure \ref{fig:sim2data}: data are completely overlapped for $x\in[0,1/2]$, but cluster overlap is still consistent also for $x>1/2$; a sparse clustering approach is thus needed.

\begin{table}
  \centering
  \footnotesize{
  \begin{tabular}{ccccccccccc}
         run  & 1 & 2 & 3 & 4 & 5 & 6 & 7 & 8 & 9 & 10 \\
    \hline
           std     & 0.361 & 0.387 & 0.356 & 0.434 & 0.422 & 0.437 & 0.482 & 0.382 & 0.401 & 0.333 \\
           sparse  & 0.0345 & 0.122 & 0.0567 & 0.113 & 0.0490 & 0.104 & 0.0502 & 0.0346 & 0.104 & 0.0611 \\
    \end{tabular}}
    \caption{results of the simulation study of Section \ref{subsec:sim2}, showing the CER values over 10 simulations. Along rows, results obtained with the two competing methods: standard functional $K$-means (std), and sparse functional $K$-means (sparse).}\label{tab:sim2} 
\end{table}

First, the CER index is used to check the adherence to the real known classes of the grouping structure detected by the two clustering strategies we are comparing: after 10 repetitions of the simulation, a standard functional $K$-means provides an average CER of 0.3996, while sparse functional $K$-means results in 0.07306, a reduction of more than 80\%. The CER values for all simulations are shown in Table \ref{tab:sim2}: in all simulated scenarios, the sparse approach manages to attain a far better adherence to the real underlying grouping structure.

Secondly, we aim at extensively commenting the results of one simulation, to check the misclassification error of the two procedures, the estimated cluster mean functions, and in the case of sparse functional $K$-means, to inspect the estimated weighting function $w$. Results obtained in the fifth run of the simulation, whose CER value is shown in the fifth column of Table \ref{tab:sim2}, are shown in Figure \ref{fig:sim2}. As it can be appreciated from the top-right panel in the picture, the standard functional $K$-means does not provide a meaningful result: nearly one third of the curves are misclassified, and this also reflects on the poor estimation of cluster mean functions shown in black in the plot (they can be compared with the true cluster means, shown in the top-left panel of the same figure).  Results obtained with sparse functional $K$-means, instead, show a pretty nice cluster assignment, coherent with the true underlying grouping structure, and a very good estimation of the cluster mean functions.

\begin{table}
  \centering
  \footnotesize{
\begin{tabular}{cc|cc}
std 2-means & & \multicolumn{2}{c}{clusters} \\
\cline{3-4} & & \multicolumn{1}{c}{1} & 2  \\
\cline{1-4} \multicolumn{1}{c|}{true} & 1 & 73 & 27 \\
 & \multicolumn{1}{|c|}{2} & 33 & 67 \\
\end{tabular}
\hspace{1.5cm}
\begin{tabular}{cc|cc}
sparse 2-means & & \multicolumn{2}{c}{clusters} \\
\cline{3-4} & & \multicolumn{1}{c}{1} & 2  \\
\cline{1-4} \multicolumn{1}{c|}{true} & 1 & 100 & 0 \\
 & \multicolumn{1}{|c|}{2} & 5 & 95 \\
\end{tabular}}
    \caption{results of the simulation study of Section \ref{subsec:sim2}, showing the confusion matrix of cluster assignments vs true labels. Left: results of standard functional 2-means clustering; right: results of sparse functional 2-means clustering.}\label{tab:sim2bis} 
\end{table}

The confusion matrices of the two partitions obtained with standard and sparse functional $K$-means with respect to the true labels are shown in Table \ref{tab:sim2bis}: with standard $K$-means, 30\% of the data are misclassified, while only 2.5\% with sparse functional $K$-means.

Finally, let us consider the weighting function estimated by sparse functional $K$-means, and shown in the bottom-left panel of Figure \ref{fig:sim2}: the weighting function is non-zero only on the subinterval of the domain $[0.521,1]$, and the optimal sparsity parameter is $m = .521$. This is obviously pointing out the fact that the true cluster mean functions differ mostly in the second half of the domain, thus detecting precisely the region of maximal cluster distinction. Moreover, $w$ is a strict monotone increasing function on $[0.521,1],$ thus also suggesting that data belonging to different clusters distinguish most for values of the abscissa $x$ closer to 1, as it is evident from the top-left plot in Figure \ref{fig:sim2}.

\section{Case Study: Berkeley Growth Data}\label{sec:realdata}

In this section we illustrate the results obtained with sparse functional clustering on the growth curves included in the Berkeley Growth Study, a benchmark dataset for functional data analysis which is also provided in the \verb"fda" package \citep{fda-package} in \verb"R" \citep{R}. Aim of this Section is to check whether a sparse approach improves the insight on these data with respect to standard techniques, and whether the weighting function $w$ helps in the interpretation of the results. Moreover, we aim at deepening the discussion on the role of the sparsity parameter, $m$, in the context of a real application. 

\begin{figure}[t]
\begin{center}
\centering\includegraphics[width=.9\textwidth]{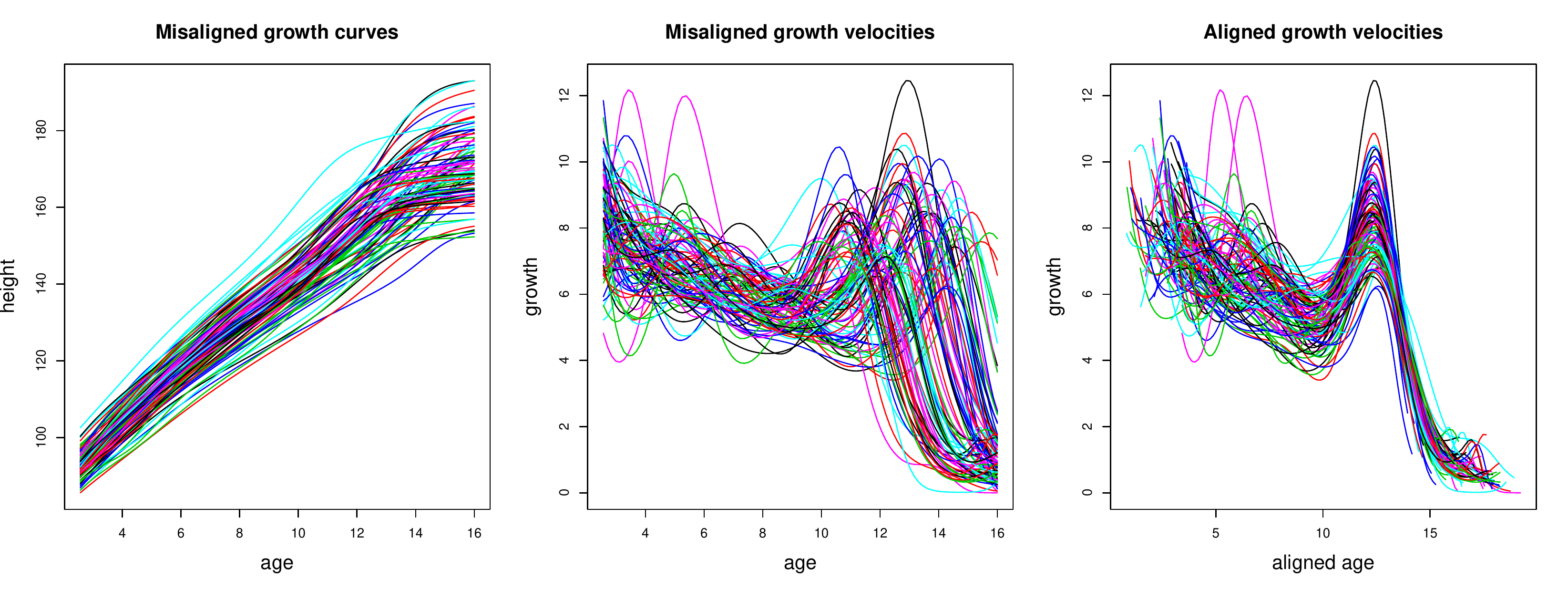}
\end{center}
\caption{the Berkeley Growth Study dataset. From left to right: reconstructed growth curves, reconstructed growth velocities, and aligned growth velocities.}\label{fig:growth_data}
\end{figure}

The Berkeley Growth Study \citep{Tuddenham-Snyder-1954} is one of several long-term developmental investigations on children conducted by the California Institute of Child Welfare. It includes the heights (in cm) of 93 children, 54 girls and 39 boys, measured quarterly from 1 to 2 years, annually from 2 to 8 years and then biannually from 8 to 18 years. It is reasonable to consider these data as discrete observations of a continuous process representing the height of each single child, i.e. the child growth curve. We reconstruct the growth curves by means of monotonic cubic regression splines \citep{rs2}, implemented using the \texttt{R} function \texttt{smooth.monotone} available in the \texttt{fda} package \citep{fda-package}. Estimated growth curves and velocities are shown in Figure \ref{fig:growth_data} (left and central panels, respectively).

We aim at comparing the results obtained via sparse functional $K$-means with the ones obtained with standard functional $K$-means. As suggested in many papers analysing the same dataset, we will focus on growth velocities. From inspection of the central panel in Figure \ref{fig:growth_data}, we notice that all children show a similar growth pattern, characterized by the pubertal spurt, known in the medical literature as a sharp peak of growth velocity between 10 and 16 years. However, the children follow their own biological clocks, thus resulting in a set of velocity curves showing the main growth spurt with different timing/duration. Moreover, a second minor feature emerges: some children also have a minor growth velocity peak between 2 and 5 years, called mid-spurt, which can be quite consistent but also absent. We would also like this minor feature not to be confounded by phase variability. Hence, we also analyse an aligned version of growth velocities, obtained by applying 1-mean alignment: this technique, first introduced in \citet{ssvv}, has been tested on the Berkeley Growth Study, thus showing that  it effectively decouples amplitude and phase variability. Aligned growth velocities are also shown in Figure \ref{fig:growth_data} (right panel).

\begin{table}
  \centering
  \footnotesize{
  \begin{tabular}{cc|cc}
std 2-means & & \multicolumn{2}{c}{clusters} \\
\cline{3-4} & & \multicolumn{1}{c}{1} & 2  \\
\cline{1-4} \multicolumn{1}{c|}{gender} & M & 37 & 2 \\
 & \multicolumn{1}{|c|}{F} & 9 & 45 \\
\end{tabular}
\hspace{1.5cm}
\begin{tabular}{cc|cc}
sparse 2-means & & \multicolumn{2}{c}{clusters} \\
\cline{3-4} & & \multicolumn{1}{c}{1} & 2  \\
\cline{1-4} \multicolumn{1}{c|}{gender} & M & 37 & 2 \\
 & \multicolumn{1}{|c|}{F} & 9 & 45 \\
\end{tabular}}
    \caption{results of the analysis of the Berkeley Growth Study data, showing the confusion matrix of cluster assignments vs gender. Left: results of standard functional 2-means clustering; right: results of sparse functional 2-means clustering}\label{tab:berkeley} 
\end{table}

As it can be appreciated in Table \ref{tab:berkeley}, the results of sparse and standard functional $2$-means of misaligned growth velocities are exactly the same: the two partitions coincide, and both estimate the gender classification quite well. Moreover, the weighting function estimated by sparse functional 2-means (Figure \ref{fig:growth_dataMF}, left panel) mainly points out the pubertal peak period. All these conclusions are reasonable, and quite coherent with previous findings \citep{ssvv,Tarpey14}. But can we gain some further insight when analysing the aligned growth velocities sample, where phase variability has been properly removed?

\begin{figure}[!ht]
\centering\includegraphics[width=.7\textwidth]{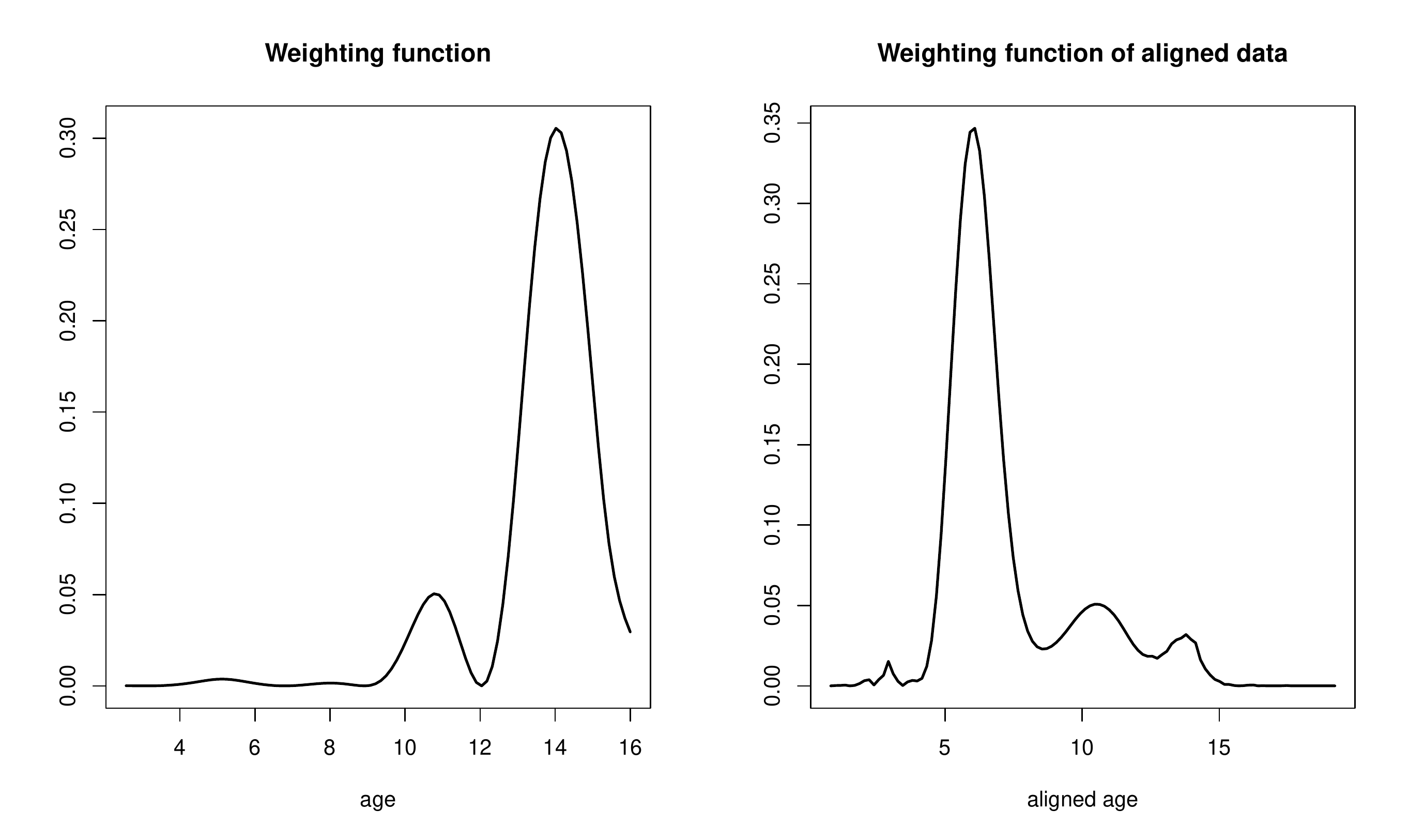}
\caption{results of the analysis of the Berkeley Growth Study data. Estimated weighting functions obtained by applying sparse functional 2-means to misaligned growth velocities (left) and to aligned ones (right).}\label{fig:growth_dataMF}
\end{figure}

Interestingly, when analysing the aligned growth velocities with standard and sparse functional 2-means, we obtain two quite different grouping structures, none of which reflects gender stratification. The gender and standard 2-means classifications seem not to point out relevant features characterizing the clusters (see Figure \ref{fig:growth_ZOOM}, center and bottom). Instead, the 2 clusters obtained by sparse functional 2-means distinguish among children having their mid-spurt, from children not having it (see Figure \ref{fig:growth_ZOOM}, top). This fact is even more evident if we look at the weighting function estimated by sparse functional 2-means when applied to aligned growth velocities, shown in Figure \ref{fig:growth_dataMF} (right panel): the period of the mid-spurt is there indicated as the most relevant portion of the domain.

In conclusion, the sparse functional $K$-means clustering of the aligned growth velocities shows novel insight on the Berkeley Growth Study data, finally pointing out a data stratification possibly related to the children's development in the early stages of their lives.

\clearpage
\newpage

\begin{figure}[!ht]
\centering\includegraphics[width=.7\textwidth]{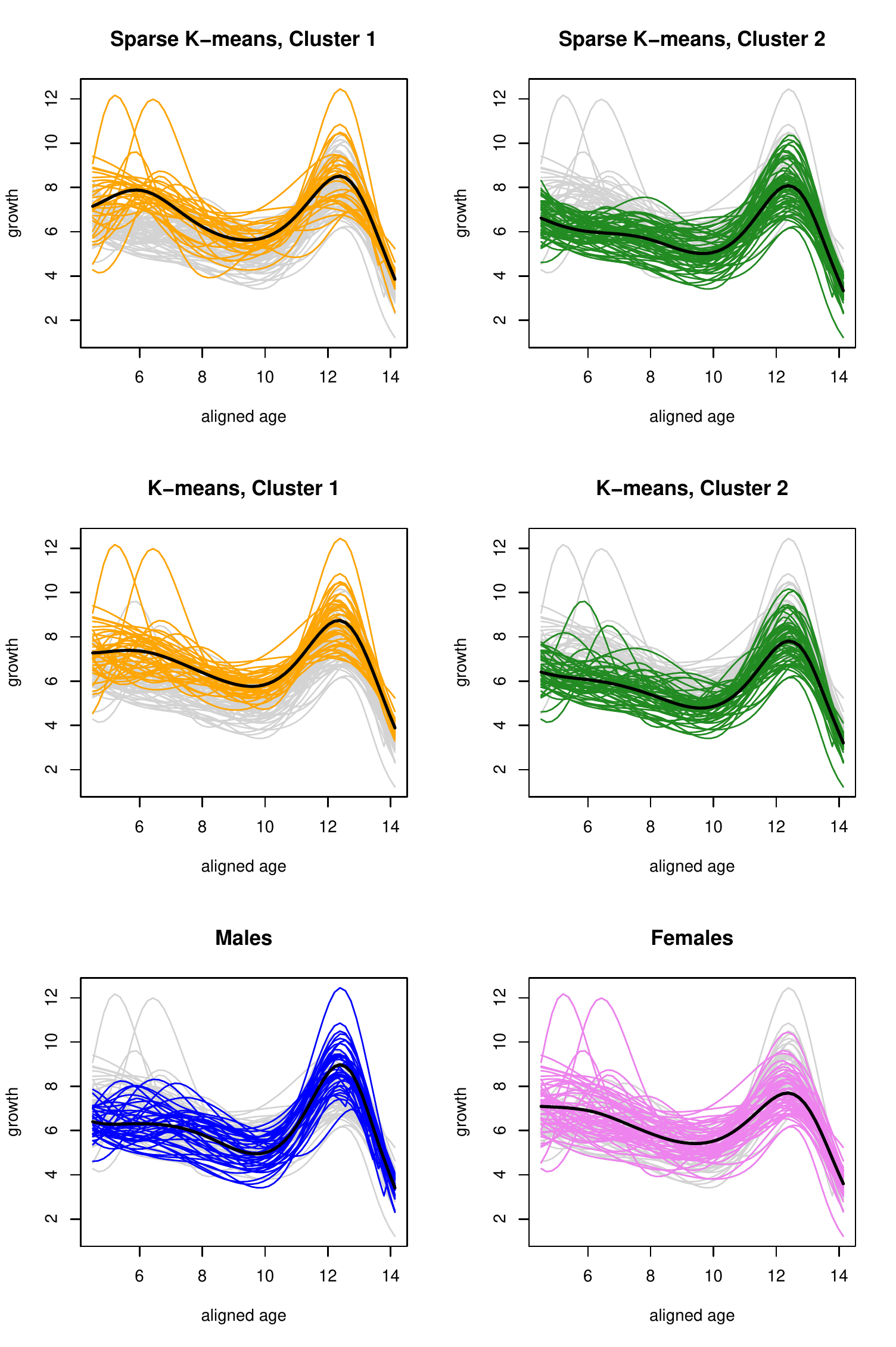}
\caption{in grey in all panels, aligned growth velocities restricred to the subset of the domain where the weighting function estimated by sparse functional 2-means has higher values. In the top panels, curves are colored according to the clustering given by sparse functional 2-means. In the center panels, they are colored according to the result of standard functional 2-means, while in the bottom panels they are colored according to gender (blue for males and pink for females).}\label{fig:growth_ZOOM}
\end{figure}

\section{Discussion}\label{sec:discussion}

In this paper we have first proposed a framework for sparse clustering of multivariate data based on hard thresholding, and proved its accuracy on simulated scenarios. We have then started from this novel definition of sparse clustering for multivariate data to properly define a quite general framework for sparse clustering of functional data. We have proven the existence and uniqueness of the optimal solution, and proposed a quite general and flexible algorithm to estimate it. Finally, we have tested the method on both simulated and real data.

Given the high level of novelty and generality of the paper, there are many further directions arising from the present proposal which are worth being explored. Firstly, properly treating the possible misalignment in the data is an issue: the problem of decoupling phase and amplitude variability is often encountered in functional data analysis \citep{rs2}, and we indeed face it in the applied case study of Section \ref{sec:realdata}. It the context of $K$-means clustering it has been considered in \cite{ssvv}, where the authors prove the optimality of performing clustering and alignment jointly. It would be then interesting and relevant to develop a joint sparse functional clustering and alignment method, possibly merging the two approaches.

Another interesting open issue concerns the functional measure: the variational problem (\ref{eq:infinitesparsepb}) could have a different solution if another functional measure different from $L^2$ were to be considered, but we may then also not be able to prove results similar to Theorem \ref{teo_infinito_dim}. An interesting choice would be, for instance, the measure of the density generating the data. We would then have the following formulation for problem (\ref{eq:infinitesparsepb})
\begin{equation}\label{nf}
 \max_{w(x), (C_1,\ldots,C_K)} \int_D w(x)b(x, f_{i,k}, \mathbf{C})d\mathbb{P}(\mathcal{X}_k(\omega, x), x),
\end{equation}
where $\mathbb{P}(\mathcal{X}_k(\omega, x), x)$ is a probability measure depending both on the functional variables generating the data and the domain, while $\mathbf{C} = (C_1,\ldots,C_K)$ is the partition. A related possibility is to cluster the data jointly with their successive derivatives, if they exist. Again, the problem is proving the existence/uniqueness of the solution in the proper functional space.

Another issue related to the variational problem concerns the optimization strategy: it would be interesting to define an iterative procedure converging to a global maximum. Moreover, if not solved in the previous point, it would be interesting also to try other clustering strategies. Finally, the tuning parameter $m$ is also worth some further thoughts: the GAP statistics-inspired criterion that we are currently using is not really generalized but just adapted to the functional case, and surely something better can be developed.

\section*{Acknowledgement}
We wish to thank Prof. Piercesare Secchi (MOX, Department of Mathematics, Politecnico di Milano, Italy) for motivating us on the study of this challenging problem, for the invaluable help on the theoretical results, and for being and extraordinary PhD advisor for (nearly) both of us. Moreover, we thank Prof. Douglas S. Bridges (Department of Mathematics and Statistics, University of Canterbury, New Zealand) for the fruitful discussions on the variational proof of Theorem \ref{teo_infinito_dim}.

\appendix
\section{Appendix: alternative proof of Theorem \ref{teo_infinito_dim}}

We here give a completely variational proof of Theorem \ref{teo_infinito_dim}.

\begin{thm}\label{teorema}

The variational problem:
\begin{align}\label{eq:infinitesparsepb_solow_new}
 \max_{w \in L^2(D)} & \int_{D} w(x) b(x) dx, \\
 \text{subject to:}\ \ & \Vert w(x) \Vert_{L^2(D)} \leq 1,\ \ w(x) \geq 0\  \mu-a.e.\ \text{and}\ \mu(\left\lbrace x \in D: w(x)=0\right\rbrace ) \geq m, \nonumber
\end{align}
with $D \subset \mathbb{R}$ compact, $0 < m < \mu(D) < \infty$, $b(x)$ continuous and non-negative a.e. and $\mu$ a regular, non-atomic and translation-invariant measure has a solution given by $w(x) = \frac{b(x)}{\Vert b(x) \Vert_{L^2(B)}}I_B(x)$, where $B = \{x \in D: b(x) > k \}$ and $k\in\mathbb{R}^+$ depend on $m$. Moreover, the solution is [$\mu$]-a.e. unique if the function $\phi(t):=\mu(\{x \in D:b(x)<t\})$ is continuous.

\end{thm}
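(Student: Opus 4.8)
The plan is to exploit the linearity of the objective together with the structure of the feasible set. Although the constraint $\mu(\{x: w(x)=0\})\ge m$ is \emph{not} convex, the other two constraints are, and this suggests decoupling the problem into an inner optimization of the weights over a fixed support and an outer optimization over the choice of support. First I would fix a feasible $w$ and set $S=\{x\in D: w(x)>0\}$, so that the sparsity constraint reads $\mu(S)\le \mu(D)-m$. Since $b\ge 0$, Cauchy--Schwarz gives
$$\int_D w(x)b(x)\,dx=\int_S w(x)b(x)\,dx\le\Big(\int_S w^2\,dx\Big)^{1/2}\Big(\int_S b^2\,dx\Big)^{1/2}\le\Big(\int_S b^2\,dx\Big)^{1/2},$$
with equality in the first step exactly when $w$ is proportional to $b$ on $S$, and in the second exactly when $\Vert w\Vert_{L^2(D)}=1$. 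This is the genuinely variational ingredient, replacing the KKT computation of Theorem \ref{Teo_sparse_vectorial}: the supremum of a bounded linear functional over the unit ball of $L^2(S)$ equals the norm of its Riesz representative, and is attained uniquely at $w=b\,I_S/\Vert b\,I_S\Vert_{L^2(D)}$ (the positivity constraint being automatic since $b\ge 0$).

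The problem thus reduces to the outer maximization $\sup\{\int_S b^2\,dx:\ \mu(S)\le\mu(D)-m\}$, which I would solve by a rearrangement (bathtub) argument. Taking $B=\{x\in D: b(x)>k\}$ with $\mu(B)=\mu(D)-m$, for any competitor $S$ one has
$$\int_B b^2\,dx-\int_S b^2\,dx=\int_{B\setminus S} b^2\,dx-\int_{S\setminus B} b^2\,dx\ge k^2\big(\mu(B\setminus S)-\mu(S\setminus B)\big)\ge 0,$$
using $b^2>k^2$ on $B\setminus S$, $b^2\le k^2$ on $S\setminus B$, and $\mu(S\setminus B)\le\mu(B\setminus S)$ (which follows from $\mu(S)\le\mu(B)$). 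This forces the optimal support to be a superlevel set of $b$ and identifies the candidate solution.

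It then remains to produce the threshold $k$ and to verify optimality of the explicit $w$. The map $k\mapsto\mu(\{b>k\})$ decreases from (essentially) $\mu(D)$ to $0$, so I would obtain a level $k\ge0$ with $\mu(\{b>k\})=\mu(D)-m$ by an intermediate-value argument; here the hypotheses on $\mu$ enter, since translation invariance and regularity on $D\subset\mathbb{R}$ make $\mu$ a multiple of Lebesgue measure and, more to the point, non-atomicity lets one carve from a level set $\{b=k\}$ (should it have positive measure) a subset of exactly the missing measure, so that the prescribed value $\mu(D)-m$ is attained. Plugging this $B$ into the displayed bounds shows that $w=b\,I_B/\Vert b\,I_B\Vert_{L^2(D)}$ is feasible and meets the upper bound $\left(\int_B b^2\,dx\right)^{1/2}$, hence is a maximizer.

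For uniqueness I would invoke continuity of $\phi(t)=\mu(\{b<t\})$: this is equivalent to the pushforward $b_\ast\mu$ having no atoms, i.e. $\mu(\{b=k\})=0$, so that the optimal superlevel set is unique up to a $\mu$-null set; combined with the strictness in the Cauchy--Schwarz step this yields $\mu$-a.e. uniqueness of $w$. The main obstacle throughout is precisely the non-convexity of the sparsity constraint, which rules out a direct convex-duality attack and is the reason the argument must split into support selection (handled by rearrangement and the non-atomicity of $\mu$) and weight optimization (handled by Cauchy--Schwarz).
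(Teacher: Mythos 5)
Your proposal is correct, and it takes a genuinely different route from the paper's own variational proof of Theorem \ref{teorema}. The paper argues by contradiction: assuming a better competitor $g$ with support $G=\{x:g(x)>0\}$ exists, it transports mass of $g$ from $G\setminus B$ into $B\setminus G$ via an explicit translation map (this is precisely where the hypothesis that $\mu$ is translation-invariant is used), going through a lengthy case analysis on how $G$ can meet $B$ (intervals, finite subcovers, the measure estimates culminating in the inequality chain for $\mu(\{\tilde{\Gamma}+\tau\}\cap(B\setminus(B\cap G))\cap O_n)$), and finally rules out competitors with $\mu(G\triangle B)=0$ by a perturbation computation of the $L^2$ norm. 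You instead decouple the problem into an inner optimization over a fixed support $S$ --- where Cauchy--Schwarz (equivalently, the Riesz representation of the functional on $L^2(S)$) gives the value $\|b\|_{L^2(S)}$ and the unique optimizer $b\,I_S/\|b\,I_S\|_{L^2}$ --- and an outer support-selection problem solved by the bathtub inequality. This buys several things: it is direct rather than by contradiction, and much shorter; it never invokes translation invariance, only non-atomicity, so it proves the result for any finite non-atomic Borel measure rather than (essentially) Lebesgue measure, exposing the translation-invariance hypothesis as an artifact of the paper's mass-shifting construction; and your carve-out of a piece of the level set $\{x:b(x)=k\}$ via non-atomicity repairs an edge case the paper glosses over, namely that when $\mu(\{x:b(x)=k\})>0$ no exact superlevel set of measure $\mu(D)-m$ need exist (the paper fixes $k$ ``thanks to the continuity of $b$,'' which does not suffice; under the continuity of $\phi$ assumed for uniqueness the issue disappears, but the existence claim is stated without it). Your equality analysis --- strict inequality in the bathtub bound off null sets, plus the equality case of Cauchy--Schwarz --- also delivers the $[\mu]$-a.e.\ uniqueness cleanly. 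It is worth noting that your argument is the natural continuum analogue of the paper's finite-dimensional Theorem \ref{Teo_sparse_vectorial}, whose swap argument is exactly the discrete bathtub principle; the paper's main text reaches the functional statement from that theorem by simple-function approximation instead, so your proof is a third, self-contained route.
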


\begin{proof} 

Let us fix a $k \in \mathbb{R}^+$, thanks to the continuity of $b$, such that $\mu(B) = \mu(D) - m$. It is clear that the function $w(x)$ as defined above, satisfies the constraints in (\ref{eq:infinitesparsepb_solow_new}). We claim that $w(x)$ is also the solution of (\ref{eq:infinitesparsepb_solow_new}).

We will proceed in the proof deriving absurd statements if we assume that the maximizing function differs somehow from the proposed solution $w(x)$. So, suppose $w(x)$ is not the maximum for the functional in (\ref{eq:infinitesparsepb_solow_new}). Then, there exists another function $g(x)$ that satisfies the constraints and
\begin{equation}\label{eq: disuguaglianza g}
\int_{D} g(x)b(x) d\mu > \int_{D} w(x)b(x) d\mu.
\end{equation} 
Define the set $G := \{x \in D: g(x) > 0 \}$. We will proceed firstly by showing that, if we suppose that $G$ differs from $B$ for a set of positive measure, we will always be able to find another function, more similar to the defined $w(x)$ and satisfying the constraints, that gives us a higher value in the functional in (\ref{eq:infinitesparsepb_solow_new}).  Then, given that $G = B$ $\left[  \mu \right]-$a.e., if we suppose that (\ref{eq: disuguaglianza g}) holds and $g(x) \neq w(x)$ on a set of positive measure, again we obtain contradictory statements.  

So, let us observe the following considerations.

$B \subset G$, with $\mu(B) < \mu(G)$ cannot happen, because $g(x)$ would not satisfy the measure constraint. 

It cannot be that $G \subset B$ and $\mu(G) < \mu(B)$, because, for any $g$ satisfying the constraints, we would have:
\begin{eqnarray}\label{primo caso}
\int_{D} g(x)b(x) d\mu & = & \int_{G} g(x)b(x) d\mu \\ \nonumber
& \leq & \Vert g(x) \Vert_{L^2(G)} \Vert b(x) \Vert_{L^2(G)} \\ \nonumber
& \leq & \Vert b(x) \Vert_{L^2(G)} \\ \nonumber
& \leq & \Vert b(x) \Vert_{L^2(B)} \\ \nonumber
& = & \int_{B} w(x)b(x) d\mu = \int_{D} w(x)b(x) d\mu.
\end{eqnarray}

Now, suppose that the sets $G$ and $B$ are different, with a non-empty intersection with each other. Specifically, let us suppose that
\begin{enumerate}
\item $G \cap B \neq \emptyset$;
\item $G \cap (D\setminus B) \neq \emptyset$;
\item $\mu(G \cap B) > 0$ and $\mu(G \cap (D\setminus B)) > 0$;
\item $\mu(G \cap B) + \mu(G \cap (D\setminus B))  \geq \mu(B)$.
\end{enumerate}
This is the most complicated case, therefore we initially assume $B \setminus (B \cap G)$ contains an interval $J$. We want to show that, given those hypotheses on the sets $B$ and $G$, \textit{no matter how} $g$ \textit{is}, we can \textit{build} a function $\tilde{g}$ for which we reach a greater value in the functional in (\ref{eq:infinitesparsepb_solow_new}). For the sake of notation, let us set $\Gamma := G \cap (D\setminus B)$. $\Gamma \subset D\setminus B$, which is compact, because $B$ is open in $D$. So, there exists an open finite subcover with intervals, $\bigcup_{i = 1}^n K_i$, of $D \setminus B$. There exists at least one of the $K_i$'s which has non empty intersection with $\Gamma$ and that intersection is of positive measure, otherwise $\Gamma$ itself is of measure zero, which would contradict the hypothesis. So, $\Gamma \subset \bigcup_{i \in \mathcal{I}} K_i$, with $\mathcal{I} \subset \{1, \ldots, n\}$ and take a $K_i$ (or a suitable subinterval of it, which we still call it $K_i$, with abuse of notation) such that $0 < \alpha\mu(K_i) \leq \mu(\Gamma \cap K_i) \leq \mu(J)$, with a suitable chosen $0 < \alpha < 1$. Choose a $\tau$ such that the set $K_i +\tau$ is contained in $J$. The set $\tilde{\Gamma}: = \Gamma \cap K_i \subset K_i$ is then translated in $J$ and $\mu(\tilde{\Gamma} + \tau) = \mu(\tilde{\Gamma}) > 0$. Setting $\phi : \tilde{\Gamma} \rightarrow \{\tilde{\Gamma} + \tau\}$, $x \mapsto x + \tau$, we define a new function $\tilde{g}$   
\begin{equation}\label{g tilde}
\tilde{g}(x) = \begin{cases} 0 & \mbox{if } x \in \tilde{\Gamma} \\
g(\phi^{-1}(x))I_{\tilde{\Gamma}}(\phi^{-1}(x)) & \mbox{if } x \in \tilde{\Gamma} +\tau \\
g(x) &\mbox{otherwise. } \end{cases} 
\end{equation}
It is easily seen that the constraints of the problem are satisfied and $\int_{D}g(x)b(x) d\mu < \int_{D}\tilde{g}(x)b(x) d\mu$, because $\forall x \in B$ and $\forall y \in D\setminus B$, $g(y)b(y) < g(y)b(x)$.

Now, let us suppose that the set $B \setminus (B \cap G)$ does not contain intervals. If there is a subset $F$ in $B \setminus (B \cap G)$ and an interval $E$, such that $\mu(E \triangle F) = 0$, we can adopt the same construction of the case above, with $J$ replaced by $E$. 

Finally, we discuss the last case. $B$ is open in $D$, therefore there exists a countable family of open intervals $\{O_n\}_{n \in \mathbb{N}}$ such that $B \subset \bigcup_{n \in \mathbb{N}} O_n$ and we further suppose that, for any $n$, the intersection $B \setminus (B \cap G) \cap O_n$ has never full measure. We can choose a suitable $K_i$ (as defined above), a $O_n$ (or suitable subintervals of those sets) and $0 < \alpha_1, \alpha_2 < 1$, for which:
\begin{equation}\label{eq: halmos inequality Gamma tilde}
0 < \alpha_1\mu(K_i) \leq \mu(\tilde{\Gamma}) \leq \mu(K_i),
\end{equation}
\begin{equation}\label{eq: halmos inequality O}
0 < \alpha_2\mu(O_n) \leq \mu(B \setminus (B \cap G) \cap O_n) \leq \mu(O_n),
\end{equation}
and with
\begin{equation}\label{eq: uguaglianze misure}
\mu(K_i)\leq \mu(O_n).
\end{equation}

Now, let us take a translation $\tau$ mapping $K_i$ one-to-one and into $O_n$ and such that $\{\tilde{\Gamma}+\tau\} \cap \left( B \setminus (B \cap G) \right)  \cap O_n \neq \emptyset$. If we can show that the set $\{\tilde{\Gamma}+\tau\} \cap \left( B \setminus (B \cap G) \right) \cap O_n$ has positive measure, then we can re-define the map $\phi : \tilde{\Gamma} \cap \left\lbrace \left(  B \setminus (B \cap G) \cap O_n \right)  -\tau \right\rbrace  \rightarrow \{\tilde{\Gamma} + \tau\} \cap \left( B \setminus (B \cap G)\right) \cap O_n$ and use the same construction of $\tilde{g}$ to arrive to the same conclusion. From (\ref{eq: halmos inequality O}) and the properties of measure, we have:
\begin{eqnarray}\label{eq: disuguaglianza halmos totale}
\mu(\{\tilde{\Gamma} + \tau\} \cap \left(  B \setminus (B \cap G) \right)  \cap O_n) & \geq & \alpha_2 \mu(\{\tilde{\Gamma} + \tau\}\cap O_n) \\ & = & \alpha_2 \mu(\tilde{\Gamma}\cap \left\lbrace O_n - \tau \right\rbrace ) \nonumber \\
& \geq & \alpha_2\mu(\tilde{\Gamma} \cap K_i) \nonumber \\
& \geq & \alpha_2\alpha_1 \mu(K_i) > 0\nonumber
\end{eqnarray}
and so the claim follows.

The previous discussion allows us to consider the case with $G$ and $B$ having empty intersection. We can use the same procedure used in the case above, with $B \setminus (B \cap G)$ containing an interval.  

Therefore, we can assume that, necessarily, we must have $\mu(G \triangle B) = 0$. Now, given that assumption, we show that if $g(x) \neq w(x)$ on a set of positive measure, $g(x)$ cannot be the maximum for the functional in (\ref{eq:infinitesparsepb_solow_new}).

Suppose that $\mu(G \triangle B) = 0$ and $\exists\ U \subset B$, such that $\mu(U) > 0$ with
\begin{equation}\label{def g (1)}
g(x) = \begin{cases} w(x) &\mbox{if } x \in D\setminus U \\ 
w(x) + h(x) & \mbox{if } x \in U, \end{cases} 
\end{equation}
with $h(x) \in L^{\infty}(U)$ and $h(x) > 0 $ a.e. Then, we would obtain:
\begin{eqnarray*}
\Vert g(x) \Vert_{L^2(G)} & = & \int_{B\setminus U} w(x)^2 d\mu + \int_{U} \left( w(x) + h(x)\right) ^2 d\mu \\
& \geq & 1 + \int_{U} h(x)^2 d\mu > 1.
\end{eqnarray*}
The case with 
\begin{equation}\label{def g trivial}
g(x) = \begin{cases} w(x) &\mbox{if } x \in D\setminus U \\ 
w(x) - h(x) & \mbox{if } x \in U, \end{cases} 
\end{equation}
$0 \leq h(x) \leq w(x)I_{U}(x)$, is clearly rejected since such a $g(x)$ would not be the solution for (\ref{eq:infinitesparsepb_solow_new}).

Suppose, now, that $\mu(G \triangle B) = 0$ and $\exists\ U, V \subset B$, such that $\mu(U) > 0,\ \mu(V) > 0$ with
\begin{equation}\label{def g (2)}
g(x) = \begin{cases} w(x) &\mbox{if } x \in D\setminus (U \cup V) \\ 
w(x) + f(x) & \mbox{if } x \in U \\
w(x) - h(x) & \mbox{if } x \in V \end{cases} 
\end{equation}
with $f(x), h(x) \in L^{\infty}(U)$ and $f(x), h(x) > 0 $ a.e. The two functions $f$ and $h$ have to satisfy some constraints in order for $g$ to be admissible. So, let us compute the $L^2$ norm of the defined $g$:
\begin{eqnarray*}
\Vert g(x) \Vert^2_{L^2(G)} & = & \int_{D\setminus (U\cup V)} w(x)^2 d\mu + \int_{U} \left( w(x) + f(x)\right) ^2 d\mu \\
& + & \int_{V} \left( w(x) - h(x)\right) ^2 d\mu.
\end{eqnarray*} 
Rearranging the integrals and substituting the definition of $w$, we obtain
\begin{eqnarray}\label{norm of g}
\Vert g(x) \Vert^2_{L^2(G)} & = & 1 + \Vert f(x) \Vert^2_{L^2(U)} + \Vert h(x) \Vert^2_{L^2(V)}\\ \nonumber & + & \frac{2}{\Vert b(x) \Vert_{L^2(B)}}\left( \int_{U}f(x)b(x) d\mu - \int_{V}h(x)b(x) d\mu \right). 
\end{eqnarray}
For $g$ to be admissible, then, we must have:
\begin{eqnarray}\label{vincolo}
%\Vert f(x) \Vert^2_{L^2(U)} + \Vert h(x) \Vert^2_{L^2(V)} + \frac{2}{\Vert b(x) \Vert_{L^2(B)}}\left( \int_{U}f(x)b(x) d\mu - \int_{V}h(x)b(x) d\mu \right) = 0, 
\Vert f(x) \Vert^2_{L^2(U)} + \Vert h(x) \Vert^2_{L^2(V)} + \\ \nonumber  \frac{2}{\Vert b(x) \Vert_{L^2(B)}}\left( \int_{U}f(x)b(x) d\mu - \int_{V}h(x)b(x) d\mu \right) = 0. 
\end{eqnarray}  
which leads to:
\begin{eqnarray}\label{solution vincolo}
\left( \int_{U}f(x)b(x) d\mu - \int_{V}h(x)b(x) d\mu \right) = \\ \nonumber -\frac{\Vert b(x) \Vert_{L^2(B)}}{2}\left( \Vert f(x) \Vert^2_{L^2(U)} + \Vert h(x) \Vert^2_{L^2(V)}\right). 
\end{eqnarray}  
Note that the second member is negative. Now, if we evaluate the functional in (\ref{eq:infinitesparsepb_solow_new}) with such a $g$, we have
\begin{eqnarray}
\int_{D}g(x)b(x) d\mu & = & \int_{B}w(x)b(x) d\mu + \int_{U}f(x)b(x) d\mu - \int_{V}h(x)b(x) d\mu \nonumber \\
& = & \Vert b(x) \Vert_{L^2(B)} -\frac{\Vert b(x) \Vert_{L^2(B)}}{2}\left( \Vert f(x) \Vert^2_{L^2(U)} + \Vert h(x) \Vert^2_{L^2(V)}\right) \nonumber \\
& < & \Vert b(x) \Vert_{L^2(B)} = \int_{D}w(x)b(x) d\mu.
\end{eqnarray}
Thus $g$ cannot be the maximum. We therefore conclude that $w(x)$, as we defined it, has to be the solution to the problem.
\end{proof}

\bibliographystyle{chicago}
\bibliography{myrefs}

\begin{thebibliography}{}

\bibitem[\protect\citeauthoryear{Boyd and Vandenberghe}{Boyd and
  Vandenberghe}{2004}]{BoydVande}
Boyd, S. and L.~Vandenberghe (2004).
\newblock {\em Convex Optimization}.
\newblock Cambridge University Press.

\bibitem[\protect\citeauthoryear{Celeux, Martin-Magniette, Maugis, and
  Raftery}{Celeux et~al.}{2013}]{CMMRR2013}
Celeux, G., M.~Martin-Magniette, C.~Maugis, and A.~Raftery (2013).
\newblock Comparing model selection and regularization approaches to variable
  selection in model-based clustering.
\newblock {\em arXiv:1307.7860\/}.

\bibitem[\protect\citeauthoryear{Chang}{Chang}{1983}]{Chang83}
Chang, W. (1983).
\newblock On using principal components before separating a mixture of two
  multivariate normal distributions.
\newblock {\em Journal of the Royal Statistical Society, Ser. C\/}~{\em 32},
  267--275.

\bibitem[\protect\citeauthoryear{Chen, Reiss, and Tarpey}{Chen
  et~al.}{2014}]{Tarpey14}
Chen, H., P.~Reiss, and T.~Tarpey (2014).
\newblock Optimally weighted ${L}^2$ distance for functional data.
\newblock {\em Biometrics\/}~{\em 70}, 516--525.

\bibitem[\protect\citeauthoryear{Fraley and Raftery}{Fraley and
  Raftery}{2002}]{Fraley-Raftery-2002}
Fraley, C. and A.~E. Raftery (2002).
\newblock Model-based clustering, discriminant analysis, and density
  estimation.
\newblock {\em Journal of the American Statistical Association\/}~{\em
  97\/}(458), 611--631.

\bibitem[\protect\citeauthoryear{Friedman and Meulman}{Friedman and
  Meulman}{2004}]{fm}
Friedman, J.~H. and J.~J. Meulman (2004).
\newblock Clustering objects on a subset of attributes.
\newblock {\em Journal of the Royal Statistical Society, Ser. B\/}~{\em 66},
  815--849.

\bibitem[\protect\citeauthoryear{Gosh and Chinnaiyan}{Gosh and
  Chinnaiyan}{2002}]{GC2002}
Gosh, D. and A.~Chinnaiyan (2002).
\newblock Mixture modelling of gene expression data from microarray
  experiments.
\newblock {\em Bioinformatics\/}~{\em 18}, 275--286.

\bibitem[\protect\citeauthoryear{Hartigan}{Hartigan}{1975}]{har}
Hartigan, J. (1975).
\newblock {\em Clustering Algorithms}.
\newblock Wiley \& Sons Inc.

\bibitem[\protect\citeauthoryear{Hartigan}{Hartigan}{1978}]{Hartigan-1978}
Hartigan, J.~A. (1978).
\newblock Asymptotic distributions for clustering criteria.
\newblock {\em The Annals of Statistics\/}~{\em 6\/}(1), 117--131.

\bibitem[\protect\citeauthoryear{James and Sugar}{James and
  Sugar}{2003}]{James2013}
James, G. and C.~Sugar (2003).
\newblock Clustering for sparsely sampled functional data.
\newblock {\em Journal of the American Statistical Association\/}~{\em
  98\/}(462), 397--408.

\bibitem[\protect\citeauthoryear{Liu, Zhang, Palumbo, and Lawrence}{Liu
  et~al.}{2003}]{Liu2003}
Liu, J., J.~Zhang, M.~Palumbo, and C.~Lawrence (2003).
\newblock Bayesian clustering with variable and transformation selections.
\newblock In {\em Bayesian Statistics 7}, eds. J.M. Bernardo et al., pp.\
  249--275. Clarendon press.

\bibitem[\protect\citeauthoryear{Luss and d'Aspremont}{Luss and
  d'Aspremont}{2010}]{Luss.et.al.10}
Luss, R. and A.~d'Aspremont (2010).
\newblock Clustering and feature selection using sparse principal component
  analysis.
\newblock {\em Optimization and Engineering\/}~{\em 11\/}(1), 145--157.

\bibitem[\protect\citeauthoryear{Maugis, Celeux, and Martin-Magniette}{Maugis
  et~al.}{2009a}]{MCMM2009a}
Maugis, C., G.~Celeux, and M.~Martin-Magniette (2009a).
\newblock Variable selection for clustering with gaussian mixture models.
\newblock {\em Biometrics\/}~{\em 65}, 701--709.

\bibitem[\protect\citeauthoryear{Maugis, Celeux, and Martin-Magniette}{Maugis
  et~al.}{2009b}]{MCMM2009b}
Maugis, C., G.~Celeux, and M.~Martin-Magniette (2009b).
\newblock Variable selection in model-based clustering: a general variable role
  modeling.
\newblock {\em Computational Statistics and Data Analysis\/}~{\em 52},
  3872--3882.

\bibitem[\protect\citeauthoryear{McLachlan and Peel}{McLachlan and
  Peel}{2000}]{McLachlan-Peel-2000}
McLachlan, G. and D.~Peel (2000).
\newblock {\em Finite Mixture Models}.
\newblock Wiley Series in Probability and Statistics.

\bibitem[\protect\citeauthoryear{Pan and Shen}{Pan and Shen}{2007}]{PS2007}
Pan, W. and X.~Shen (2007).
\newblock Penalized model-based clustering with application to variable
  selection.
\newblock {\em Journal of Machine Learning Research\/}~{\em 8}, 1145--1164.

\bibitem[\protect\citeauthoryear{Pollard}{Pollard}{1981}]{po2}
Pollard, D. (1981).
\newblock Strong consistency of k-means clustering.
\newblock {\em The Annals of Statistics\/}~{\em 9\/}(1), 135--140.

\bibitem[\protect\citeauthoryear{{R Development Core Team}}{{R Development Core
  Team}}{2011}]{R}
{R Development Core Team} (2011).
\newblock {\em R: A Language and Environment for Statistical Computing}.
\newblock Vienna, Austria: R Foundation for Statistical Computing.
\newblock {ISBN} 3-900051-07-0.

\bibitem[\protect\citeauthoryear{Raftery and Dean}{Raftery and
  Dean}{2006}]{RD2006}
Raftery, A. and N.~Dean (2006).
\newblock Variable selection for model-based clustering.
\newblock {\em Journal of the American Statistical Association\/}~{\em 101},
  168--178.

\bibitem[\protect\citeauthoryear{Ramsay and Silverman}{Ramsay and
  Silverman}{2005}]{rs2}
Ramsay, J.~O. and B.~W. Silverman (2005).
\newblock {\em Functional Data Analysis}.
\newblock Springer.

\bibitem[\protect\citeauthoryear{Ramsay and Wickham}{Ramsay and
  Wickham}{2007}]{fda-package}
Ramsay, J.~O. and H.~Wickham (2007).
\newblock {\em fda. Functional Data Analysis}.
\newblock R package version 3.0.2.

\bibitem[\protect\citeauthoryear{Rand}{Rand}{1971}]{Rand71}
Rand, W. (1971).
\newblock Objective criteria for the evaluation of clustering methods.
\newblock {\em Journal of the American Statistical Association\/}~{\em 66},
  846--850.

\bibitem[\protect\citeauthoryear{Sangalli, Secchi, Vantini, and
  Vitelli}{Sangalli et~al.}{2010}]{ssvv}
Sangalli, L.~M., P.~Secchi, S.~Vantini, and V.~Vitelli (2010).
\newblock K-mean alignment for curve clustering.
\newblock {\em Computational Statistics and Data Analysis\/}~{\em 54},
  1219--1233.

\bibitem[\protect\citeauthoryear{Tarpey and Kinateder}{Tarpey and
  Kinateder}{2003}]{tk}
Tarpey, T. and K.~K.~J. Kinateder (2003).
\newblock Clustering functional data.
\newblock {\em Journal of Classification\/}~{\em 20}, 93--114.

\bibitem[\protect\citeauthoryear{Tibshirani, Walther, and Hastie}{Tibshirani
  et~al.}{2001}]{twh2001}
Tibshirani, R., G.~Walther, and T.~Hastie (2001).
\newblock Estimating the number of clusters in a dataset via the gap statistic.
\newblock {\em Journal of the Royal Statistical Society, Ser. B\/}~{\em
  32\/}(2), 411--423.

\bibitem[\protect\citeauthoryear{Tuddenham and Snyder}{Tuddenham and
  Snyder}{1954}]{Tuddenham-Snyder-1954}
Tuddenham, R.~D. and M.~M. Snyder (1954).
\newblock Physical growth of california boys and girls from birth to age 18.
\newblock Technical Report~1, University of California Publications in Child
  Development.

\bibitem[\protect\citeauthoryear{Wang and Zhu}{Wang and Zhu}{2008}]{WZ2008}
Wang, S. and J.~Zhu (2008).
\newblock Variable selection for model-based high dimensional clustering and
  its application to microarray data.
\newblock {\em Biometrics\/}~{\em 64}, 440--448.

\bibitem[\protect\citeauthoryear{Witten and Tibshirani}{Witten and
  Tibshirani}{2010}]{twitt}
Witten, D. and R.~Tibshirani (2010).
\newblock A framework for feature selection in clustering.
\newblock {\em Journal of American Statistical Association\/}~{\em 105\/}(490),
  713--726.

\bibitem[\protect\citeauthoryear{Xie, Pan, and Shen}{Xie
  et~al.}{2008}]{XPS2008}
Xie, B., W.~Pan, and X.~Shen (2008).
\newblock Penalized model-based clustering with cluster-specific diagonal
  covariance matrices and grouped variables.
\newblock {\em Electronic Journal of Statistics\/}~{\em 2}, 168--212.

\end{thebibliography}

\end{document}